\documentclass[runningheads]{llncs}

\usepackage[T1]{fontenc}
%
%
%

\usepackage[utf8]{inputenc}
\usepackage{amsmath}
\usepackage{amssymb}
\usepackage{multicol}
\usepackage{xspace}
\usepackage{upgreek}
\usepackage[textsize=scriptsize]{todonotes}
\usepackage{algorithm}

\usepackage[noEnd=true]{algpseudocodex}
  \algnewcommand{\IIf}[1]{\State\algorithmicif\ #1\ \algorithmicthen}
  \algnewcommand{\EndIIf}{\unskip\ \algorithmicend\ \algorithmicif}

  \makeatletter
  \renewcommand{\ALG@beginalgorithmic}{\scriptsize}
  \makeatother
  \algrenewcommand\alglinenumber[1]{\tiny #1:}

\usepackage{proof}

\usepackage[draft]{fixme}

\usepackage[font=small,skip=8pt]{caption}
\setlength{\belowcaptionskip}{-10pt}
\usepackage{stmaryrd}
\usepackage{paralist}
\usepackage{proof}

\usepackage{tikz}
\usetikzlibrary{positioning}

\usepackage{multirow}

\usepackage{xparse}

\usepackage[all]{xy}
\newdir{|>}{%
  !/4.5pt/@{|}*:(1,-.2)@^{>}*:(1,+.2)@_{>}}

\usepackage[english]{babel}

\usepackage{hyperref}
\hypersetup{
  linkcolor  = blue,
  citecolor  = blue,
  urlcolor   = blue,
  colorlinks = true,
}

\usepackage{mathrsfs}

\usepackage{mathtools}
\usepackage{forest}
\usepackage{etoolbox}

\usepackage{thm-restate}

%


\usepackage{cleveref}

\Crefname{algorithm}{Alg.}{Algs.}
\Crefname{definition}{Def.}{Defs.}
\Crefname{figure}{Fig.}{Figs.}
\Crefname{proposition}{Prop.}{Props.}
\Crefname{theorem}{Thm.}{Thms.}
\Crefname{example}{Ex.}{Exs.}
\Crefname{corollary}{Cor.}{Cors.}
\Crefname{section}{Sec.}{Secs.}
\Crefname{appendix}{App.}{Apps.}

\usepackage{tikz}
\usetikzlibrary{arrows,decorations,shapes,automata,positioning,decorations.pathmorphing,patterns}

\tikzset{
  every picture/.style = {
    thick,
    ->,
    >=stealth',
    node distance = 1.5em and 3em,
  }
  ,
  cross line/.style = {
    preaction = {
      draw=white,
      -,
      line width=4pt
    }
  }
  , 
  state/.style = {
    rectangle,
    rounded corners = 5pt,
    font = \footnotesize,
    draw,
    minimum width = 1em,
    minimum height = 1em
  }
  , 
  label-state/.style = {
    sloped,
    font = \scriptsize,
    label distance = -2pt
  }
  , 
  label-edge/.style = {
    font = \scriptsize,
    label distance = -2pt
  }
}


\newcommand{\PROP}{{\rm \sf Prop}\xspace}
\newcommand{\ACT}{{\rm \sf Act}\xspace}


\newcommand{\kh}{{\mathsf{Kh}}}

\newcommand{\lra}{\leftrightarrow}

\newcommand{\liff}{\leftrightarrow}

\newcommand{\A}{{\operatorname{\sf A}}}
\newcommand{\E}{{\operatorname{\sf E}}}

\newcommand{\SAT}{\mathsf{Sat}}




\newcommand{\KHlogic}{\ensuremath{\mathsf{L}_{\kh}}\xspace}


\newcommand{\model}{\mathfrak{M}}

\newcommand{\R}{\operatorname{R}}
\renewcommand{\S}{\operatorname{S}}

\newcommand{\V}{\operatorname{V}}

\newcommand{\plan}{\pi}

\newcommand{\PLANS}{\ACT^{*}}

\newcommand{\lts}{\textup{LTS}\xspace}

\newcommand{\truthset}[2]{\llbracket #2 \rrbracket^{#1}}


\newcommand{\stexec}{\operatorname{SE}}







\newcommand{\card}[1]{{\mid}{#1}{\mid}}
\newcommand{\tup}[1]{\langle{#1}\rangle}

\newcommand{\csetsc}[2]{\{{#1}\mid {#2}\}}
\newcommand{\setof}[2]{\{{#1}\mid {#2}\}}
\newcommand{\set}[1]{\{{#1}\}}





\newcommand{\colornuevo}{teal}

\newcommand{\colornota}{Peach}

\newcommand{\colorincompleto}{red}



\newcommand{\NP}{{\rm\textsf{NP}}\xspace}
\newcommand{\CoNP}{{\rm\textsf{Co-NP}}\xspace}

\newcommand{\Poly}{{\rm\textsf{P}}\xspace}
\newcommand{\PSPACE}{{\rm\textsf{PSpace}}\xspace}

\newcommand{\PH}{{\rm\textsf{PH}}\xspace}


\DeclareMathOperator{\sat}{\SAT}
\DeclareMathOperator{\unsat}{\mathsf{Unsat}}

\newcommand{\tset}[1]{\llbracket #1 \rrbracket}

\newcommand{\zerodisplayskips}{%
  \setlength{\abovedisplayskip}{5pt}%
  \setlength{\belowdisplayskip}{5pt}%
  \setlength{\abovedisplayshortskip}{5pt}%
  \setlength{\belowdisplayshortskip}{5pt}}
\appto{\normalsize}{\zerodisplayskips}
\appto{\small}{\zerodisplayskips}
\appto{\footnotesize}{\zerodisplayskips}

\DeclareMathOperator{\depth}{\mathsf{md}}
\DeclareMathOperator{\sforms}{\mathsf{sf}}

\DeclareMathOperator{\C}{\mathrm{\Pi}}


\begin{document}
\title{How Easy it is to Know How: \\
An Upper Bound for the Satisfiability Problem}
\titlerunning{How Easy it is to Know How}
\author{
Carlos Areces \inst{1,2} \and
Valentin Cassano \inst{1,2,3} \and
Pablo F. Castro \inst{1,3} \and Raul Fervari \inst{1,2,4} \and
Andr\'es R. Saravia \inst{1,2}}
\authorrunning{C. Areces et al.}
%
\institute{Consejo Nacional de Investigaciones Cient\'ificas y T\'ecnicas (CONICET), Argentina \and Universidad Nacional de Córdoba (UNC), Argentina \and Universidad Nacional de Río Cuarto (UNRC), Argentina \and Guangdong Technion - Israel Institute of Technology (GTIIT), China}

\maketitle

\begin{abstract}
    We investigate the complexity of the satisfiability problem for a modal logic expressing `knowing how' assertions, related to an agent's abilities to achieve a certain goal. 
    We take one of the most standard semantics for this kind of logics based on linear plans.
    Our main result is a proof that checking satisfiability of a `knowing how' formula can be done in $\Sigma_2^P$.
    The algorithm we present relies on eliminating nested modalities in a formula, and then performing multiple calls to a satisfiability checking oracle for propositional logic.

    \keywords{Knowing How  \and Complexity \and Satisfiability.}

    \bigskip
    {\color{red} {\bf Errata}: This is a complete version of our paper 
published in the 18th Edition of the European Conference on Logics in 
Artificial Intelligence (JELIA~2023) hosted by the TU Dresden, Germany, in 
Sep.~20---22, 2023. This version corrects the published version since the 
latter is missing the algorithms for the satisfiability problem we studied.}
\end{abstract}

\section{Introduction}
\label{sec:intro}
The term `Epistemic Logic'~\cite{Hintikka:1962} encompasses a family of logical formalisms aimed at reasoning about the knowledge of autonomous agents about a given scenario.  Originally,  epistemic logics restricted their attention to so-called \emph{knowing that}, i.e.,  the capability of agents to know about certain facts.
More recently, several logics have been proposed to reason about alternative forms of knowledge (see~\cite{Wang16} for a discussion).  For instance, \emph{knowing whether} is looked into in~\cite{FWvD15}; \emph{knowing why} in~\cite{XuW16};  and \emph{knowing the value} in~\cite{GW16,Baltag16}, just to mention a few. Finally, a novel approach focuses on \emph{knowing how} --related to an agent's ability to achieve a goal~\cite{fantl2012introduction}. This concept is particularly interesting, as it has been argued to provide a fresh way to reason  about scenarios involving strategies in AI, such as those found in automated planning (see, e.g.,~\cite{HBEL}).

The first attempts to capture knowing how were through a combination of  `knowing that' and actions (see, e.g.,~\cite{Mccarthy69,Moore85,Les00,HerzigT06}).  However, it has been discussed, e.g., in~\cite{wiebeetal:2003,Herzig15}, that this idea does not lead to an accurate representation of knowing how. In response, a new logic is presented in~\cite{Wang15lori,Wang2016} featuring an original modality specifically tailored to model the concept of `knowing how'.  In a nutshell, an agent knows how to a achieve a goal $\varphi$ under some initial condition $\psi$, written $\kh(\psi,\varphi)$, if and only if there exists a `proper' plan $\plan$, i.e., a finite sequence of actions, that unerringly leads the agent from situations in which $\psi$ holds only to situations in which $\varphi$ holds. A `proper' plan is taken as one whose execution never aborts, an idea that takes inspiration from the notion of \emph{strong executability} from contingent planning~\cite{Smith&Weld98}.
As discussed in, e.g.,~\cite{JamrogaA07,Herzig15}, the quantification pattern we just described cannot be captured using logics with `knowing that' modalities and actions. 
For this reason, the new~$\kh$ modality from~\cite{Wang15lori,Wang2016} has reached a certain consensus in the community as an accurate way of modelling `knowing how'.
Moreover, it has paved the way to a deep study of knowing how, and to a rich family of logics capturing variants of the initial reading. Some examples of which are a ternary modality of knowing how with intermediate constraints~\cite{LiWang17}; a knowing how modality with weak plans~\cite{Li17}; a local modality for strategically knowing how~\cite{FervariHLW17} (and some relatives, see~\cite{Naumov2018a,NaumovT18}); and, finally, a knowing how modality which considers an epistemic indistinguishability relation among plans~\cite{AFSVQ21}.

As witnessed by all the ideas it triggered, the foundational work in~\cite{Wang15lori,Wang2016} greatly improved the understanding of `knowing how' from a logical standpoint.
The literature on logics of `knowing how' explores a wide variety of results, such as axiom systems (in most of the works cited above), proof methods~\cite{Li21,Li23}, and expressivity~\cite{FervariVQW22}, just to name a few. 
Yet, if we consider `knowing how' logics as suitable candidates for modelling problems in strategic reasoning, it is important to consider how difficult (or how easy) it is to use these logics for reasoning tasks.
There have been some recent developments on the complexity of logics with `knowing how' modalities.
For instance, model-checking for the $\kh$ modality above, and some of its variants, is investigated in~\cite{DF23}.
The complexity of model-checking and the decidability status of satisfiability for the local `knowing how' modality from~\cite{FervariHLW17}, and some of its generalizations, is explored in~\cite{LiW21}.
These two problems are also explored for `knowing how' with epistemic indistinguishability in~\cite{AFSVQ21}.
Notwithstanding, the complexity of the satisfiability problem for the original $\kh$ modality from~\cite{Wang15lori,Wang2016} is still unknown (\cite{Li17bis} 
presents only a decidability statement).
In this work, we shed some light into this matter.

Our contribution is to provide an upper for the satisfiability problem of the knowing how logic from~\cite{Wang15lori,Wang2016}, called here $\KHlogic$. 
More precisely, we introduce an algorithm for deciding satisfiability that is in $\Sigma^\Poly_2$, the second level of the polynomial hierarchy (\PH)~\cite{Stock76}. In short, this complexity class can be though as those problems invoking an $\NP$ oracle a polynomial number of times, and whose underlying problem is also in~\NP (see e.g.~\cite{AroraB09}).  Currently,  it is unknown whether \PH collapses, or it is strictly contained in \PSPACE. 
This being said, having an algorithm in a lower level of \PH is generally understood as a good indication that the problem is close to, e.g., \NP or~{\CoNP}. 
It is easy to see that \NP is a lower bound for checking satisfiability in $\KHlogic$, as it extends propositional logic. 
For an upper bound, a natural candidate is \PSPACE, as for instance the model-checking problem for $\KHlogic$ is \PSPACE-complete~\cite{DF23}, a potentially higher complexity of what is proved here for satisfiability.
We argue that this is due to the fact that in model-checking the full expressivity of the semantics is exploited (specially related to properties of regular languages), whereas for satisfiability, all this expressivity is completely hidden.
Although our procedure does not lead to a tight complexity characterization, it gives us an interesting upper bound towards filling this gap. 

We put forth that our result is not obvious. To obtain it, we combine techniques such as defining a normal form to eliminate nested modalities, calling an \NP oracle to guess propositional valuations and computing a closure over a matrix of formulas to combine them, adapting the Floyd-Warshall algorithm~\cite{Cormen22}. 


The article is organized as follows. In~\Cref{sec:basic} we introduce some notation and the basic definitions of the logic $\KHlogic$. \Cref{sec:upper} is devoted to incrementally show our result. Finally, in \Cref{sec:final} we provide some remarks and future lines of research.


\section{Knowing How Logic}
\label{sec:basic}
From here onwards,  we assume $\PROP$ is  a denumerable set of \emph{proposition symbols}, and $\ACT$ is a denumerable set of \emph{action symbols}. 
We refer to $\plan \in \PLANS$ as a \emph{plan}.

\begin{definition}\label{def:khsyntax}
    The \emph{language} $\KHlogic$ is determined by the grammar:
    \begin{align*}
        \varphi,\psi & \Coloneqq
            p \mid
            {\lnot \varphi} \mid
            {\varphi \lor \psi} \mid
            \kh(\varphi,\psi),
    \end{align*}
    where $p\in\PROP$. We use $\bot$, $\top$, $\varphi \land \psi$, $\varphi \to \psi$, and ${\varphi \liff \psi}$ as the usual abbreviations; $\A\varphi$ is defined as $\kh(\lnot\varphi,\bot)$ (see e.g.~\cite{Wang15lori,Wang2016}), while $\E\varphi$ abbreviates $\lnot\A\lnot\varphi$.
    The elements of $\KHlogic$ are \emph{formulas}.
\end{definition}

We read $\kh(\varphi,\psi)$ as: \emph{``the agent knows how to achieve $\psi$ given $\varphi$''}.
We call~$\varphi$ and $\psi$, the precondition and the postcondition of $\kh(\varphi,\psi)$, respectively. We read $\A\varphi$ as: \emph{``$\varphi$ holds anywhere''}; and its dual $\E\varphi$ as: \emph{``$\varphi$ holds somewhere''}.
As it is usually done, we refer to $\A$ and $\E$ as the \emph{universal} and \emph{existential} modalities~\cite{GorankoP92}. 

Formulas of $\KHlogic$ are interpreted with respect to \emph{labelled transition systems} over so-called \emph{strongly executable plans}.
Sometimes, we refer to LTS as \emph{models}. 
We introduce their definitions below.

\begin{definition}\label{def:lts}
    A \emph{labelled transition system (\lts)} is a tuple $\model = \tup{\S,\R,\V}$ s.t.:
    \begin{enumerate}[(1)]
    \item $\S$ is a non-empty set of \emph{states};
    \item ${\R = \setof{\R_a}{a \in \ACT}}$ is a collection of binary relations on $\S$; and
    \item $\V: \PROP \to 2^{\S}$ is a \emph{valuation function}.
    \end{enumerate}
\end{definition}

\begin{definition}\label{def:plans-aux}
    Let $\csetsc{\R_a}{a \in\ACT}$ be a collection of binary relations on $\S$.
    Let $\varepsilon\in\ACT^*$ be the empty plan.
    We define:  $\R_{\varepsilon} = \setof{(s,s)}{s \in \S}$, and 
    for every $\plan \in \PLANS$, and $a \in \ACT$,
        $\R_{\plan{a}} = \R_{\plan}\R_{a}$ (i.e., their composition). For every relation $\R_\plan$, and $T\subseteq\S$, define $\R_\plan(T)=\setof{(s,t)}{s \in T \mbox{ and } (s,t)\in\R_\plan}$, and $\R_\plan(t)=\R_\plan(\set{t})$.
\end{definition}

The notion of \emph{strong executability} determines the ``adequacy'' of a plan.
Strong executability takes inspiration from conformant planning~\cite{Smith&Weld98}, and its jusification is discussed at length in~\cite{Wang15lori}.

\begin{definition} \label{def:plans-exec}
    Let $\plan=a_1 \dots a_n \in \ACT^*$, and $1 \leq i \leq j \leq n$, we denote: 
    $\plan_i = a_i$;
    $\plan[i,j] = a_i \dots a_j$; and
    $|\plan| = n$.
    Moreover, let $\model = \tup{\S, \R, \V}$ be an LTS;
    we say that $\plan$ is \emph{strongly executable (SE)} at
    $s \in \S$, iff for all $i \in [1,\card{\plan}-1]$ and all $t \in \R_{(\plan[1,i])}(s)$, it follows that $\R_{\plan_{(i+1)}}(t) \neq \emptyset$. 
    The set of all states at which $\plan$ is strongly executable is defined as $\stexec(\plan) = \setof{s}{\plan \ \mbox{\rm is SE at } s}$.
    Note: $\stexec(\varepsilon) = \S$. 
\end{definition}

We illustrate the notions we just introduced with a simple example.
\begin{example}\label{ex:rel-notions}
    Let $\model=\tup{\S,\R,\V}$ be the LTS depicted below and $\plan=ab$. We have, $\R_{\plan}(s)=\set{u}$, and  $\R_{\plan[1,1]}(s)=\R_{a}(s)=\set{t,v}$. It can be seen that $s \in \stexec(a)$; while $s \notin \stexec(\plan)$ --since $v\in\R_{\plan[1,1]}(s)$ and $\R_{\plan_{(2)}}(v)=\R_{b}(v)=\emptyset$. Finally, we have that $\stexec(\varepsilon)=\S$, $\stexec(a)=\set{s}$ and $\stexec(ab)=\emptyset$.

\begin{center}
    \begin{tikzpicture}[->]
        \node [state,label=above:\small$s$] (w1) {$p$};
        \node [state, label=above:\small$t$,right = of w1] (w2) {$r$};
        \node [state, label=above:\small$v$, below = of w2] (w3) {$r$};
        \node [state, label=above:\small$u$,right = of w2] (w4) {$q$};


        \path (w1) edge node [label-edge, above] {$a$} (w2)
            (w2) edge node [label-edge, above] {$b$} (w4)
            (w1) edge node [label-edge, above] {$a$} (w3);

    \end{tikzpicture}
\end{center}
\end{example}

We are now ready to introduce the semantics of $\KHlogic$, based on~\cite{Wang15lori,Wang2016}.

\begin{definition} \label{def:khsemantics}
    Let $\model = \tup{\S,\R,\V}$ be an \lts, we define $\truthset{\model}{\varphi}$ inductively as:
    \begin{align*}
        \truthset{\model}{p} &=
            \V(p) \quad \quad 
        \truthset{\model}{\lnot \varphi} =
            \S \setminus \truthset{\model}{\varphi}  \quad \quad 
        \truthset{\model}{\varphi \lor \psi} =
            \truthset{\model}{\varphi} \cup \truthset{\model}{\varphi} \\
        \truthset{\model}{\kh(\varphi,\psi)} &=
            \begin{cases}
                \S & \text{if exists } \plan{\in}\PLANS \text{s.t. } 
                        \truthset{\model}{\varphi} \subseteq \stexec(\plan) \text{ and }
                        \R_{\plan}(\truthset{\model}{\varphi}) \subseteq \truthset{\model}{\psi} \\
                \emptyset & \text{otherwise}. \\
            \end{cases}
    \end{align*}
   We say that a plan $\plan \in \PLANS$ is a \emph{witness} for $\kh(\varphi,\psi)$ iff 
       $\truthset{\model}{\varphi} \subseteq \stexec(\plan)$ and
       $\R_{\plan}(\truthset{\model}{\varphi}) \subseteq \truthset{\model}{\psi}$.
    We use $(\truthset{\model}{\varphi})^{\complement}$ instead of $\S \setminus \truthset{\model}{\varphi}$.
    We write $\model \Vdash \varphi$ as an alternative to $\truthset{\model}{\varphi} = \S$; and $\model, s \Vdash \varphi$ as an alternative to $s \in \truthset{\model}{\varphi}$.
\end{definition}

\begin{example}\label{exs:semantics}
Let $\model$ be the LTS from~\Cref{ex:rel-notions}. From~\Cref{def:khsemantics}, we have ${\truthset{\model}{\kh(p,r)}=\S}$ (using $a$ as a witness), while $\truthset{\model}{\kh(p,q)}=\emptyset$ (there is no witness for the formula).
\end{example}

We included the universal modality $\A$ as abbreviation since formulas of the form $\A\varphi$ play a special role in our treatment of the complexity of the satisfiability problem for $\KHlogic$. It is proven in, e.g.,~\cite{Wang15lori,Wang2016}, that $\A\varphi$ and $\E\varphi$ behave as the universal and existential modalities (\cite{GorankoP92}), respectively. Recall that $\A\varphi$ is defined as $\kh(\neg\varphi,\bot)$, which semantically states that $\varphi$ holds everywhere in a model iff $\neg\varphi$ leads always to impossible situations. Formulas of this kind are called here `global'. Below, we formally restate the results just discussed.

\begin{proposition}\label{prop:ik:universal}
     Let $\model = \tup{\S,\R,\V}$ and $\psi$ and $\chi$ be formulas s.t.\ $\truthset{\model}{\chi}=\emptyset$; then $\truthset{\model}{\kh(\psi,\chi)}=\S$ iff $\truthset{\model}{\lnot\psi}=\S$.
\end{proposition}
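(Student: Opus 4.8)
The plan is to read off the semantics of $\kh$ and reduce the biconditional to a statement purely about reachability. By \Cref{def:khsemantics}, $\truthset{\model}{\kh(\psi,\chi)}=\S$ holds precisely when there is a witness plan $\plan\in\PLANS$ with $\truthset{\model}{\psi}\subseteq\stexec(\plan)$ and $\R_\plan(\truthset{\model}{\psi})\subseteq\truthset{\model}{\chi}$. Since $\truthset{\model}{\chi}=\emptyset$ by hypothesis, the second conjunct is equivalent to $\R_\plan(\truthset{\model}{\psi})=\emptyset$. Hence I only need to show that a plan $\plan$ with $\truthset{\model}{\psi}\subseteq\stexec(\plan)$ and $\R_\plan(\truthset{\model}{\psi})=\emptyset$ exists if and only if $\truthset{\model}{\psi}=\emptyset$, i.e.\ $\truthset{\model}{\lnot\psi}=\S$.

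The key step is a reachability lemma: if $\plan$ is strongly executable at a state $s$, then $\R_\plan(s)\neq\emptyset$. I would prove this by induction on the length $k$ of the prefix $\plan[1,k]$, establishing $\R_{\plan[1,k]}(s)\neq\emptyset$ for every $k\le\card{\plan}$. The base case $k=0$ is immediate, as $\R_\varepsilon(s)=\set{s}$; the inductive step takes any $t\in\R_{\plan[1,k]}(s)$ and uses strong executability to guarantee an $a_{k+1}$-successor of $t$, which then lies in $\R_{\plan[1,k+1]}(s)$. Intuitively, this records exactly the design intent behind strong executability: such a plan never aborts, so its execution always yields at least one endpoint.

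With the lemma in hand, both directions are short. For $(\Leftarrow)$, if $\truthset{\model}{\lnot\psi}=\S$ then $\truthset{\model}{\psi}=\emptyset$, and the empty plan $\varepsilon$ is a witness, since $\truthset{\model}{\psi}=\emptyset\subseteq\S=\stexec(\varepsilon)$ and $\R_\varepsilon(\emptyset)=\emptyset\subseteq\truthset{\model}{\chi}$. For $(\Rightarrow)$, fix a witness $\plan$ and suppose toward a contradiction that some $s\in\truthset{\model}{\psi}$. Then $s\in\stexec(\plan)$, so by the lemma $\R_\plan(s)\neq\emptyset$, producing a state in $\R_\plan(\truthset{\model}{\psi})$ and contradicting $\R_\plan(\truthset{\model}{\psi})=\emptyset$. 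Thus $\truthset{\model}{\psi}=\emptyset$, as required.

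I expect the only delicate point to be the reachability lemma, together with pinning down the precise reading of strong executability. In particular, the definition must ensure that the first action already fires from $s$, as \Cref{ex:rel-notions} shows (there $\stexec(a)=\set{s}$ rather than all of $\S$); once the empty-prefix case is accounted for in the induction, the argument goes through smoothly and everything else is routine set-theoretic manipulation.
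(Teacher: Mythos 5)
Your proof is correct, and since the paper states \Cref{prop:ik:universal} without proof (deferring to the cited works of Wang), there is no in-paper argument to diverge from; your unfolding of the semantics plus the nonemptiness lemma --- that $s \in \stexec(\plan)$ implies $\R_\plan(s) \neq \emptyset$, proved by induction on prefixes --- is exactly the standard argument this restated result rests on, and both directions (the empty plan $\varepsilon$ as witness when $\truthset{\model}{\psi} = \emptyset$, and the lemma forcing $\truthset{\model}{\psi} = \emptyset$ given a witness) are carried out correctly. You also rightly flagged the one genuinely delicate point: read literally, the quantification $i \in [1,\card{\plan}-1]$ in \Cref{def:plans-exec} never constrains the first action, which would make every length-one plan strongly executable everywhere and contradict $\stexec(a)=\set{s}$ in \Cref{ex:rel-notions}; your repair --- including the empty-prefix case $\R_{\varepsilon}(s)=\set{s}$ in the induction, i.e.\ letting $i$ range from $0$ --- recovers the intended definition, and with it the lemma and the proposition go through.
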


\begin{corollary}\label{prop:nts:universal}
    Let $\model = \tup{\S,\R,\V}$ and a formula $\varphi$, 
   $\model,s\Vdash \A\varphi$ iff $\truthset{\model}{\varphi}=\S$.
\end{corollary}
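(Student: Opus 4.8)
The plan is to derive the corollary directly from \Cref{prop:ik:universal} by instantiating it at the appropriate formulas and exploiting the fact that the truth set of any $\kh$-formula is either $\S$ or $\emptyset$.

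First I would unfold the abbreviation. Since $\A\varphi$ is defined as $\kh(\lnot\varphi,\bot)$, the assertion $\model,s\Vdash\A\varphi$ is by definition $s\in\truthset{\model}{\kh(\lnot\varphi,\bot)}$. By the semantic clause for $\kh$ in \Cref{def:khsemantics}, the set $\truthset{\model}{\kh(\lnot\varphi,\bot)}$ is either all of $\S$ or empty; hence the membership of a single state $s$ is equivalent to the whole set being $\S$. This lets me drop the dependence on $s$ and reduces the claim to the equivalence $\truthset{\model}{\kh(\lnot\varphi,\bot)}=\S$ iff $\truthset{\model}{\varphi}=\S$.

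Next I would apply \Cref{prop:ik:universal} with $\psi\coloneqq\lnot\varphi$ and $\chi\coloneqq\bot$. Its side condition $\truthset{\model}{\chi}=\emptyset$ holds because $\truthset{\model}{\bot}=\emptyset$ under the usual reading of $\bot$. The proposition then yields $\truthset{\model}{\kh(\lnot\varphi,\bot)}=\S$ iff $\truthset{\model}{\lnot\lnot\varphi}=\S$. A final appeal to double negation, namely $\truthset{\model}{\lnot\lnot\varphi}=\truthset{\model}{\varphi}$ (immediate from the clauses for $\lnot$ in \Cref{def:khsemantics}), closes the chain of equivalences and establishes the corollary.

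I do not expect any genuine obstacle: the statement is essentially a specialization of \Cref{prop:ik:universal}. The only two points requiring a word of care are (i) observing that the all-or-nothing nature of $\truthset{\model}{\kh(\cdot,\cdot)}$ is what allows the passage from the pointed statement $\model,s\Vdash\A\varphi$ to the global condition $\truthset{\model}{\varphi}=\S$, and (ii) verifying that $\bot$ indeed serves as the everywhere-false postcondition $\chi$ so that the hypothesis of the proposition is met.
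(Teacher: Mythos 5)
Your proof is correct and follows exactly the route the paper intends: the corollary is stated as an immediate consequence of \Cref{prop:ik:universal}, obtained by unfolding $\A\varphi$ as $\kh(\lnot\varphi,\bot)$, instantiating $\psi\coloneqq\lnot\varphi$ and $\chi\coloneqq\bot$, and simplifying $\lnot\lnot\varphi$ to $\varphi$. Your two points of care (the all-or-nothing truth set of $\kh$-formulas justifying the passage from the pointed to the global statement, and $\truthset{\model}{\bot}=\emptyset$ discharging the side condition) are precisely the observations needed, so nothing is missing.
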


We introduce now \Cref{prop:pre:post}, which is of use in the rest of the paper.

\begin{restatable}{proposition}{propcomposition}\label{prop:pre:post}\label{prop:composition}
   Let $\psi,\psi',\chi,\chi'$ and $\varphi$ be formulas, and $\model$ an LTS; then:
    \begin{enumerate}[(1)]
        \item
            $\truthset{\model}{\psi'} \subseteq \truthset{\model}{\psi}$ and
            $\truthset{\model}{\chi} \subseteq \truthset{\model}{\chi'}$ implies
                $\truthset{\model}{\kh(\psi,\chi)} \subseteq \truthset{\model}{\kh(\psi',\chi')}$;
        \item
            $\truthset{\model}{\psi} \subseteq \truthset{\model}{\psi'}$ implies
                $(\truthset{\model}{\kh(\varphi,\psi)} \cap \truthset{\model}{\kh(\psi',\chi)}) \subseteq \truthset{\model}{\kh(\varphi,\chi)}$.
    \end{enumerate}
\end{restatable}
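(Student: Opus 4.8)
The plan is to exploit the fact that, by \Cref{def:khsemantics}, every truth set of the form $\truthset{\model}{\kh(\cdot,\cdot)}$ equals either $\S$ or $\emptyset$. Hence each inclusion to be established holds vacuously unless the $\kh$-truth sets appearing on the left are all equal to $\S$, in which case it suffices to exhibit a witnessing plan for the $\kh$-formula on the right. So both items reduce to manipulating witnesses, and the only genuine content is in how witnesses can be reused or combined.

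For item (1), I would assume $\truthset{\model}{\kh(\psi,\chi)}=\S$ and take a witness $\plan$, so that $\truthset{\model}{\psi}\subseteq\stexec(\plan)$ and $\R_\plan(\truthset{\model}{\psi})\subseteq\truthset{\model}{\chi}$. I then claim the \emph{same} $\plan$ witnesses $\kh(\psi',\chi')$. Using $\truthset{\model}{\psi'}\subseteq\truthset{\model}{\psi}$ I get $\truthset{\model}{\psi'}\subseteq\truthset{\model}{\psi}\subseteq\stexec(\plan)$, and by monotonicity of the image together with $\truthset{\model}{\chi}\subseteq\truthset{\model}{\chi'}$ I get $\R_\plan(\truthset{\model}{\psi'})\subseteq\R_\plan(\truthset{\model}{\psi})\subseteq\truthset{\model}{\chi}\subseteq\truthset{\model}{\chi'}$. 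Thus $\truthset{\model}{\kh(\psi',\chi')}=\S$. This part is essentially routine.

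For item (2), I would assume $\truthset{\model}{\kh(\varphi,\psi)}=\S$ and $\truthset{\model}{\kh(\psi',\chi)}=\S$, with respective witnesses $\plan_1$ and $\plan_2$, and propose the concatenation $\plan_1\plan_2$ as a witness for $\kh(\varphi,\chi)$. Checking the postcondition is direct: since $\R_{\plan_1\plan_2}=\R_{\plan_1}\R_{\plan_2}$ (composition, \Cref{def:plans-aux}) we have $\R_{\plan_1\plan_2}(\truthset{\model}{\varphi})=\R_{\plan_2}(\R_{\plan_1}(\truthset{\model}{\varphi}))$, and from $\R_{\plan_1}(\truthset{\model}{\varphi})\subseteq\truthset{\model}{\psi}\subseteq\truthset{\model}{\psi'}$ with monotonicity and the postcondition of $\plan_2$ this is contained in $\R_{\plan_2}(\truthset{\model}{\psi'})\subseteq\truthset{\model}{\chi}$.

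The main obstacle is the precondition, i.e.\ verifying $\truthset{\model}{\varphi}\subseteq\stexec(\plan_1\plan_2)$. I would isolate this as a composition lemma for strong executability: if $s\in\stexec(\plan_1)$ and $\R_{\plan_1}(s)\subseteq\stexec(\plan_2)$, then $s\in\stexec(\plan_1\plan_2)$. To prove it, write $\plan_1=a_1\dots a_m$ and $\plan_2=b_1\dots b_k$ and check the defining condition of \Cref{def:plans-exec} prefix by prefix: for a prefix lying entirely within $\plan_1$ the required non-emptiness follows from $s\in\stexec(\plan_1)$; for a prefix extending past step $m$, every reachable state factors as a $\plan_2$-descendant of some $s'\in\R_{\plan_1}(s)\subseteq\stexec(\plan_2)$, so the needed non-emptiness follows from strong executability of $\plan_2$ at $s'$. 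Applying this to every $s\in\truthset{\model}{\varphi}$, using $\R_{\plan_1}(s)\subseteq\R_{\plan_1}(\truthset{\model}{\varphi})\subseteq\truthset{\model}{\psi}\subseteq\truthset{\model}{\psi'}\subseteq\stexec(\plan_2)$, yields $\truthset{\model}{\varphi}\subseteq\stexec(\plan_1\plan_2)$, so $\plan_1\plan_2$ witnesses $\kh(\varphi,\chi)$ and the inclusion holds. The delicate point I expect to require most care is the index bookkeeping in \Cref{def:plans-exec} at the seam between $\plan_1$ and $\plan_2$, in particular ensuring the first action of $\plan_2$ is executable from each state in $\R_{\plan_1}(s)$.
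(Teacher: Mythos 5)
Your proof is correct and takes exactly the route the paper intends: the paper omits a written proof of this proposition, but its later discussion of composing witnesses (concatenation $\plan\plan'$, echoing the validity $(\kh(\psi,\chi)\land\A(\chi \to \psi')\land\kh(\psi',\chi')) \to \kh(\psi,\chi')$ it cites) confirms that item (2) is meant to be proved by concatenating the two witness plans together with the strong-executability composition lemma you isolate, while item (1) is the routine reuse of a single witness under monotonicity. Your worry about the seam is also well placed: the literal index range $[1,\card{\plan}-1]$ in \Cref{def:plans-exec} would not force the first action of $\plan_2$ to be executable at states in $\R_{\plan_1}(s)$, but the paper's own \Cref{ex:rel-notions} (e.g.\ $\stexec(a)=\set{s}$ and $\stexec(ab)=\emptyset$) shows the intended condition includes the $i=0$ case, under which your lemma goes through as you argue.
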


We conclude this section with some useful definitions. 

\begin{definition}\label{def:satisfiable}
    A formula $\varphi$ is \emph{satisfiable}, written $\sat(\varphi)$, iff there is $\model$ s.t.\ $\truthset{\model}{\varphi} \neq \emptyset$.
    A finite set $\Phi = \{\varphi_1, \dots, \varphi_n \}$ of formulas is \emph{satisfiable}, written $\sat(\Phi)$, iff $\sat(\varphi_1 \land \dots \land \varphi_n)$.
    For convenience, we define $\sat(\emptyset)$ as true.
    We use $\unsat(\varphi)$ iff $\sat(\varphi)$ is false; similarly,  $\unsat(\Phi)$ iff $\sat(\Phi)$ is false.
    Finally, whenever $\sat(\varphi)$ iff $\sat(\varphi')$, we call $\varphi$ and $\varphi'$ \emph{equisatisfiable}, and write $\varphi \equiv_{\sat} \varphi'$.
\end{definition}

\begin{definition}\label{def:md}
    The modal depth of a formula $\varphi$, written $\depth(\varphi)$, is defined as:
    \begin{align*}
        \depth(\varphi) & =
            \begin{cases}
                0 & \text{if } \varphi \in \PROP \\
                \depth(\psi) & \text{if } \varphi = \lnot \psi \\
                \max(\depth(\psi),\depth(\chi)) & \text{if } \varphi = \psi \lor \chi \\
                1+\max(\depth(\psi),\depth(\chi)) & \text{if } \varphi = \kh(\psi,\chi).
            \end{cases}
    \end{align*}
    We use $\sforms(\varphi)$ to indicate the set of subformulas of $\varphi$.
    We say that $\kh(\psi,\chi)$ is a leaf of $\varphi$ iff $\kh(\psi,\chi) \in \sforms(\varphi)$ and $\depth(\psi) = \depth(\chi) = 0$ (i.e., $\depth(\kh(\psi,\chi)=1)$).
\end{definition}
  
In words, the modal depth of a formula is the length of the longest sequence of nested modalities in the formula; whereas a leaf is a subformula of depth one. Notice that, since $\A\varphi$ is a shortcut for $\kh(\neg\varphi,\bot)$, we have $\depth(\A\varphi)=1+\depth(\varphi)$. 

\begin{example}\label{ex:md}
    Let $\varphi = \kh(p, \kh(\lnot q, p \to q)) \lor \kh(r,t)$;
     it can easily be checked that $\depth(\varphi) = 2$ and that $\kh(\lnot q, p \to q)$ and $\kh(r,t)$ are its modal leaves.
\end{example}

\section{An Upper Bound for the Satisfiability Problem of $\KHlogic$}
\label{sec:upper}
In this section we establish an upper bound on the complexity of the satisfiability problem for $\KHlogic$, which is the main result of our paper. 
We start with some preliminary definitions and results. 

\begin{restatable}{proposition}{propflatten}\label{prop:flatten}\label{prop:composition}
Let $\varphi'$ be the result of replacing all occurrences of a leaf $\theta$ in $\varphi$ by a proposition symbol $k \notin \sforms(\varphi)$; it follows that $\varphi \equiv_{\sat} (\varphi' \land (\A{k} \lra \theta))$.
\end{restatable}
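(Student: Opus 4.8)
The plan is to establish the two directions of the equisatisfiability separately, with a single \emph{locality lemma} as the engine: if a model $\model$ assigns $k$ and $\theta$ the same truth set, that is $\truthset{\model}{k} = \truthset{\model}{\theta}$, then replacing $\theta$ by $k$ anywhere in a formula leaves its truth set unchanged, so in particular $\truthset{\model}{\varphi'} = \truthset{\model}{\varphi}$. Throughout I understand $\A(k \lra \theta)$ as the assertion that the biconditional $k \lra \theta$ holds at \emph{every} state, which by \Cref{prop:nts:universal} is exactly $\truthset{\model}{k} = \truthset{\model}{\theta}$. It is this coincidence of the full extensions of $k$ and $\theta$, rather than a mere agreement on their global truth value, that the argument relies on.

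For the ``$\Rightarrow$'' direction, suppose $\model = \tup{\S,\R,\V}$ and $s$ satisfy $\model,s \Vdash \varphi$. Since $k \notin \sforms(\varphi)$, the symbol $k$ occurs neither in $\varphi$ nor in $\theta$, so I am free to reset its valuation. I would take $\model'$ to be $\model$ with $\V$ altered only at $k$, setting $\V(k) := \truthset{\model}{\theta}$. As $k$ does not occur in $\theta$, this change does not affect $\theta$, so $\truthset{\model'}{k} = \truthset{\model'}{\theta}$ and hence $\A(k \lra \theta)$ holds throughout $\model'$. The locality lemma then gives $\truthset{\model'}{\varphi'} = \truthset{\model'}{\varphi} = \truthset{\model}{\varphi} \ni s$, where the middle equality holds because $k \notin \sforms(\varphi)$ (a standard coincidence argument). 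Thus $\model',s \Vdash \varphi' \land \A(k \lra \theta)$.

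For the ``$\Leftarrow$'' direction, suppose $\model,s \Vdash \varphi' \land \A(k \lra \theta)$. From the second conjunct and \Cref{prop:nts:universal} I obtain $\truthset{\model}{k} = \truthset{\model}{\theta}$, and the locality lemma applied in reverse yields $\truthset{\model}{\varphi} = \truthset{\model}{\varphi'} \ni s$. Hence the very same model already witnesses $\sat(\varphi)$; no new model needs to be constructed, since the global constraint has already made $k$ and $\theta$ interchangeable.

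The main work, and the step I expect to require the most care, is the locality lemma itself, proved by induction on the structure of $\varphi$. The propositional cases are immediate (a proposition other than the fresh $k$ is untouched, and negation and disjunction commute with complement and union of truth sets), and the base case $\varphi = \theta$ is precisely the hypothesis $\truthset{\model}{k} = \truthset{\model}{\theta}$. The crucial case is $\varphi = \kh(\alpha,\beta)$ with $\kh(\alpha,\beta) \neq \theta$: here one appeals to the fact that, by \Cref{def:khsemantics}, the value of $\truthset{\model}{\kh(\alpha,\beta)}$ depends on $\alpha$ and $\beta$ only through the sets $\truthset{\model}{\alpha}$ and $\truthset{\model}{\beta}$, since the existence of a witnessing plan is phrased entirely in terms of these sets together with the fixed relations $\R_{\plan}$ and the fixed map $\stexec$. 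The inductive hypotheses supply $\truthset{\model}{\alpha'} = \truthset{\model}{\alpha}$ and $\truthset{\model}{\beta'} = \truthset{\model}{\beta}$, so the same plans witness $\kh(\alpha',\beta')$ and $\kh(\alpha,\beta)$ and their truth sets coincide. This locality of the $\kh$ semantics is exactly what makes the definitional renaming sound.
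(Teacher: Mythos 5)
Your proof is correct and is, in essence, the intended argument: a substitution (locality) lemma showing that whenever $\truthset{\model}{k}=\truthset{\model}{\theta}$ the replacement leaves every truth set unchanged --- the $\kh$ case resting on the fact that \Cref{def:khsemantics} depends on the arguments only through their truth sets --- combined, for the left-to-right direction, with resetting the valuation of the fresh symbol $k$ to $\truthset{\model}{\theta}$, which is legitimate precisely because $k\notin\sforms(\varphi)$ and $\theta\in\sforms(\varphi)$. You were also right, and it is worth affirming, to read $\A{k}\lra\theta$ as $\A(k\lra\theta)$, i.e., as forcing the \emph{extensions} of $k$ and $\theta$ to coincide: under the alternative parse $(\A{k})\lra\theta$ the proposition would actually be false (e.g., for $\varphi=\E\theta\land\E\lnot\theta$ with $\theta=\kh(p,\bot)$, $\varphi$ is unsatisfiable because $\theta$ is global, yet $\E{k}\land\E\lnot{k}\land((\A{k})\lra\theta)$ holds in a two-state, transition-free model where $k$ and $p$ are true at exactly one state, making $\theta$ globally false and $\A{k}$ false), so the extensional reading your locality lemma relies on is the only one under which the statement, and the invariant of the \textsc{Flatten} algorithm, is sound.
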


We say that $\varphi$ is in \emph{leaf normal form} iff $\depth(\varphi) \leq 1$.
\Cref{prop:form2form1} tells us that we can put any formula into an equisatisfiable formula in leaf normal form.
The function {\sc Flatten} in \Cref{alg:form2form1} tells us how to do this in polynomial time.

\begin{proposition}\label{prop:form2form1}
  \Cref{alg:form2form1} is in $\Poly$; on input $\varphi$, it outputs $\varphi_0$ and $\varphi_1$ such that $\depth(\varphi_0) = 0$, $\depth(\varphi_1) = 1$, and $\varphi \equiv_{\sat} (\varphi_0 \land \varphi_1)$.
\end{proposition}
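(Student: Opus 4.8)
The plan is to prove \Cref{prop:form2form1} by showing that {\sc Flatten} repeatedly applies \Cref{prop:flatten} to eliminate nested modalities, bottom-up, until the formula has modal depth at most one, and then to split the result into its purely propositional part and its depth-one part. The key observation is that \Cref{prop:flatten} lets us replace a \emph{leaf} $\theta$ (a $\kh$-subformula of depth one) by a fresh propositional symbol $k$, at the cost of adding the conjunct $\A k \lra \theta$. Since $\A k \lra \theta$ has modal depth $1 + \depth(\theta)$, and $\theta$ is a leaf so $\depth(\theta)=1$, this conjunct sits at depth $2$; but crucially its \emph{arguments} $k$ and $\theta$ are now depth $0$ and depth $1$ respectively, so after the substitution the only genuine nesting has been pushed into a controlled global constraint. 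The strategy is to iterate this: each application strictly decreases the number of occurrences of modalities nested inside other modalities, so the process terminates.

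First I would make precise what {\sc Flatten} does on input $\varphi$: it finds a leaf $\theta$ of $\varphi$ whose \emph{parent} modality lies strictly below some other modality (i.e.\ $\theta$ occurs nested inside another $\kh$), introduces a fresh $k \notin \sforms(\varphi)$, replaces every occurrence of $\theta$ to obtain $\varphi'$, and records the conjunct $\A k \lra \theta$. By \Cref{prop:flatten}, $\varphi \equiv_{\sat} (\varphi' \land (\A k \lra \theta))$. Iterating, after finitely many rounds we obtain an equisatisfiable conjunction in which no modality is nested inside another, i.e.\ a formula of modal depth at most one. At that point I would set $\varphi_0$ to be the conjunction of all the depth-zero (propositional) conjuncts produced along the way together with the depth-zero residue of $\varphi$, and $\varphi_1$ to be the conjunction of the depth-one pieces; a short structural argument gives $\depth(\varphi_0)=0$ and $\depth(\varphi_1)=1$, with $\varphi \equiv_{\sat} (\varphi_0 \land \varphi_1)$ following from transitivity of $\equiv_{\sat}$ across all the rewriting steps.

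For the complexity claim, I would count occurrences. Each leaf replacement substitutes a single fresh symbol for one subformula and introduces one new conjunct of the form $\A k \lra \theta$ whose size is bounded by $|\theta| + O(1)$; since every $\theta$ is a subformula of the original $\varphi$, the total added size is bounded by the sum of sizes of subformulas, which is polynomial in $|\varphi|$. The number of iterations is bounded by the number of modal subformulas of $\varphi$, hence linear in $|\varphi|$, and each iteration (locating a nested leaf, performing the substitution, emitting a conjunct) is itself polynomial-time. Therefore {\sc Flatten} runs in $\Poly$ and its output has size polynomial in $|\varphi|$.

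The main obstacle I anticipate is being careful about what $\A k \lra \theta$ contributes to the modal depth and ensuring the final separation into $\varphi_0$ and $\varphi_1$ is clean. Naively, $\A k$ is shorthand for $\kh(\lnot k, \bot)$, so $\A k \lra \theta$ has depth $2$, not $1$; the subtlety is that each such conjunct must itself be re-flattened (or the algorithm must be organized so that the $\A(\cdot)$ wrapper is treated as the single permitted top-level modality and is \emph{not} counted as nesting in the leaf-normal-form sense). I would resolve this by defining leaf normal form and the depth-one class so that top-level $\A$-constraints and top-level $\kh$-leaves are exactly the depth-one formulas, verifying that $\A k \lra \theta$ decomposes into such pieces, and confirming the fresh-symbol hypothesis $k \notin \sforms(\varphi)$ is maintained across iterations so that repeated applications of \Cref{prop:flatten} remain sound.
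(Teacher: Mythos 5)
There is a genuine gap, and it stems from two related misreadings. First, you misparse the side constraint: the conjunct added by {\sc Flatten} is $(\A k) \liff \theta$, not $\A(k \liff \theta)$. Since $\A k$ abbreviates $\kh(\lnot k, \bot)$ with both arguments of depth $0$, we have $\depth(\A k) = 1$, and hence $\depth((\A k) \liff \theta) = \max(1, \depth(\theta)) = 1$ when $\theta$ is a leaf. So the ``depth $2$'' obstacle you devote your final paragraph to does not exist, and the machinery you propose to deal with it (treating top-level $\A$-constraints as ``not counted as nesting'', redefining leaf normal form) is unnecessary; the paper's definitions already make each constraint sit exactly at depth $1$, ready to be conjoined into $\varphi_1$ with no re-flattening.

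Second, and more seriously, your stopping condition is wrong. You propose to replace only leaves whose parent modality is nested inside another modality, halting once ``no modality is nested inside another'', and then to split the resulting depth-$\leq 1$ formula into a depth-$0$ conjunct $\varphi_0$ and a depth-$1$ conjunct $\varphi_1$. That split does not exist in general: for $\varphi = p \lor \kh(q,r)$ your procedure terminates immediately (nothing is nested), yet there is no way to carve a disjunction into the required conjunctive shape $\varphi_0 \land \varphi_1$ with $\depth(\varphi_0)=0$ --- the modal leaf is entangled with the propositional structure. Your ``short structural argument'' for $\depth(\varphi_0)=0$ is precisely the step that fails. The paper's {\sc Flatten} avoids this by replacing \emph{every} leaf of $\varphi_0$, nested or not, looping until $\varphi_0$ has no leaves at all, i.e., until $\varphi_0$ is purely propositional; all modal content is exported into $\varphi_1$ as a conjunction of constraints $(\A k \liff \theta)$, each of depth exactly $1$, with equisatisfiability maintained at every step by \Cref{prop:flatten}. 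Your complexity accounting (each replacement adds a conjunct of size $O(|\theta|)$, the number of replacements is bounded by the number of modal subformulas, fresh symbols keep \Cref{prop:flatten} applicable) is sound and matches the intended argument, but it is attached to a procedure whose output does not satisfy the proposition; the fix is simply to adopt the paper's stronger loop guard.
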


\begin{algorithm}[!t]
  \caption{Flatten}
  \label{alg:form2form1}
  \begin{algorithmic}[1]
    \Require true
    \Function{Flatten}{$\varphi$}
      \State $\varphi_0, \varphi_1 \gets \varphi, \top$
      \Loop \Comment{\textbf{invariant:} $\varphi \equiv_{\sat} (\varphi_0 \land \varphi_1)$ (see \Cref{prop:flatten})}
        \State $\Theta \gets \text{the set of leaves of $\varphi_0$}$
        \IIf {$\Theta = \emptyset$}
          \textbf{break}
        \EndIIf \Comment{loop guard}
        \ForAll{ $\theta \in \Theta$ }
          \State $k\hspace*{3.7pt} \gets \text{a proposition symbol not in } \sforms(\varphi_0 \land \varphi_1)$
          \State $\varphi_0 \gets \text{result of replacing all occurrences of $\theta$ in $\varphi_0$ for $k$}$
          \State $\varphi_1 \gets \varphi_1 \land (\A{k} \liff \theta)$
        \EndFor
      \EndLoop
      \State \Return $\varphi_0 \land \varphi_1$
    \EndFunction
    \Ensure
      $\depth(\varphi_0) = 0$ and
      $\depth(\varphi_1) = 1$ and
      $\varphi \equiv_{\sat} (\varphi_0 \land \varphi_1)$
  \end{algorithmic}
\end{algorithm}

The result in \Cref{prop:form2form1} allows us to think of the complexity of the satisfiability problem for $\KHlogic$ by restricting our attention to formulas in leaf normal form.
In turn, this enables us to work towards a solution in terms of subproblems.
More precisely, given $\varphi_0$ and $\varphi_1$ in the leaf normal form that results from {\sc Flatten}, the subproblems are 
    (i) determining the satisfiability of $\varphi_0$; and
    (ii) determining the satisfiability of $\varphi_1$ based on a solution to (i).
The solution to (i) is well-known, $\varphi_0$ is a propositional formula.
We split the solution of (ii) into 
    (a) determining when formulas of the form $\kh(\psi_1,\chi_1) \land \dots \land \kh(\psi_n,\chi_n)$ are satisfiable, see~\Cref{prop:sat+};
    (b) determining when formulas of the form $\lnot\kh(\psi'_1,\chi'_1) \land \dots \land \neg\kh(\psi'_m,\chi'_m)$ are satisfiable, see~\Cref{prop:sat-}; and
    (c) combining (a) and (b), see~\Cref{prop:compatible}.
We present (a), (b), and (c), in a way such that they incrementally lead to a solution to the satisfiability problem for $\KHlogic$.
Finally, in~\Cref{prop:alg:compatible}, we show how to combine (i) and (ii) to obtain an upper bound on the complexity of this problem. 

Let us start by solving the first problem: checking whether a conjunction $\varphi$ of positive formulas in leaf normal form are satisfiable altogether. In a nutshell, we show that solving this problem boils down to building a set $I$ of the preconditions of those subformulas whose postconditions are falsified in the context of $\varphi$, and checking whether the formulas in $I$ are satisfiable or not.
Intuitively, the formulas in $I$ correspond to `global' formulas. 
We made precise these ideas in~\Cref{prop:sat+}.

\begin{proposition} \label{prop:sat+}
    Let $\varphi = \kh(\psi_1,\chi_1) \land \dots \land \kh(\psi_n,\chi_n)$ be such that $\depth(\varphi) = 1$; and let 
    the sets $I_0,\dots,I_n$ be defined as follows:       
        \[
            I_i =
                \begin{cases}
                    \setof{k \in [1,n]}{\unsat(\chi_k)} & \text{if } i=0, \\
                    I_{(i-1)} \cup \setof{k \in [1,n]}{\unsat(\setof{\lnot \psi_{k'}}{k' \in I_{(i-1)}} \cup \set{\chi_k})} & \text{if } i>0,
                \end{cases}          
        \]
        where $i \in [0,n]$; further, let $I=I_n$.
    Then: (1) $\sat(\varphi)$ iff (2) $\sat(\bigwedge_{i \in I}\lnot \psi_i)$.
\end{proposition}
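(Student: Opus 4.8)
The plan is to prove the two implications separately, after recording one structural fact about the sets $I_i$. Since each $\kh(\psi_k,\chi_k)$ is a leaf, its truth set in any $\model$ is either $\S$ or $\emptyset$, so $\sat(\varphi)$ amounts to the existence of a single $\model$ in which every conjunct holds globally. Intuitively, $I$ collects exactly those indices $k$ whose postcondition $\chi_k$ is \emph{forced} to be globally false once the preconditions indexed at earlier stages are forced to be globally false; by \Cref{prop:ik:universal} this in turn forces $\lnot\psi_k$ to hold everywhere. The first step I would take is to observe that the chain $I_0 \subseteq I_1 \subseteq \dots \subseteq I_n$ lives inside the $n$-element set $[1,n]$, hence stabilizes within $n$ steps, so that $I = I_n$ is a fixpoint of the underlying operator. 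Concretely, this yields the reformulation: for every $k \notin I$ we have $\sat(\setof{\lnot\psi_{k'}}{k' \in I} \cup \set{\chi_k})$. This fixpoint property is what makes $I$ usable in both directions.

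For the direction $(1)\Rightarrow(2)$, I would take any $\model$ with $\model \Vdash \varphi$ and show, by induction on the stage $i$, that $k \in I_i$ implies $\truthset{\model}{\chi_k} = \emptyset$ and hence, by \Cref{prop:ik:universal} applied to the true formula $\kh(\psi_k,\chi_k)$, that $\truthset{\model}{\lnot\psi_k} = \S$. The base case uses that $\unsat(\chi_k)$ makes $\chi_k$ false in every state. For the inductive step, the key remark is that the $\psi_k,\chi_k$ are propositional (modal depth $0$), so each state $s$ determines a propositional valuation deciding all of them; if every state already satisfies $\lnot\psi_{k'}$ for all $k' \in I_{(i-1)}$ and some state satisfied $\chi_k$, that state's valuation would witness $\sat(\setof{\lnot\psi_{k'}}{k' \in I_{(i-1)}} \cup \set{\chi_k})$, contradicting $k \in I_i$. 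Reading off the valuation of any (existing) state then gives a propositional model of $\bigwedge_{i \in I}\lnot\psi_i$.

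For the converse $(2)\Rightarrow(1)$, I would build a witnessing $\model$ from a propositional valuation $v^*$ of $\bigwedge_{i \in I}\lnot\psi_i$ together with, for each $k \notin I$, a valuation $w_k$ of $\setof{\lnot\psi_{k'}}{k' \in I} \cup \set{\chi_k}$, which exists by the fixpoint property. Restricting attention to $\vars(\varphi)$, let $\S$ consist of one state $s_v$ per valuation $v$ over $\vars(\varphi)$ satisfying $G = \bigwedge_{k' \in I}\lnot\psi_{k'}$, each interpreted by $v$; this is nonempty (it contains $s_{v^*}$) and makes $\truthset{\model}{\psi_{k'}} = \emptyset$ for every $k' \in I$, so those conjuncts hold via the empty plan $\varepsilon$. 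For each $k \notin I$ I would add a fresh action $a_k$ with $\R_{a_k} = \setof{(s_v, s_{w_k})}{v \models \psi_k}$; since every single-action plan is strongly executable everywhere and $w_k \models \chi_k$, the plan $a_k$ is a witness for $\kh(\psi_k,\chi_k)$ (this also covers the case $\truthset{\model}{\psi_k}=\emptyset$ trivially). Hence every conjunct holds and $\truthset{\model}{\varphi} = \S \neq \emptyset$.

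The main obstacle is this converse construction: guaranteeing \emph{simultaneously} that the preconditions indexed by $I$ are globally unsatisfied (so their $\kh$-conjuncts hold vacuously) and that each remaining $\kh(\psi_k,\chi_k)$ still has a reachable $\chi_k$-target respecting the global constraint $G$. The fixpoint characterization of $I$ is exactly what reconciles these demands, since it certifies that for $k \notin I$ the postcondition $\chi_k$ remains satisfiable together with $G$, so a target $s_{w_k}$ exists inside the model. The remaining verifications -- that the added transitions do not perturb the truth sets of the propositional $\psi_k,\chi_k$ (immediate, as these are modality-free) and that the distinct actions keep the witnesses independent -- are routine.
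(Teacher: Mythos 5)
Your proof is correct, and your left-to-right direction is essentially the paper's: the same induction on the stages $I_i$, using \Cref{prop:ik:universal} to convert $\truthset{\model}{\chi_k}=\emptyset$ into $\truthset{\model}{\lnot\psi_k}=\S$, exploiting that all pre/postconditions are propositional. Where you genuinely depart is the converse. The paper argues by contradiction: assuming $\unsat(\varphi)$, it builds the model whose states are all valuations of $\bigwedge_{i\in I}\lnot\psi_i$, isolates the nonempty set $K$ of conjuncts that fail there, then repairs them in a second model $\model'$ with full product relations $\truthset{\model'}{\psi_j}\times\truthset{\model'}{\chi_j}$ for $j\in K$, together with an action-renaming $\sigma$ to check that the global conjuncts (those in $I\cup J$) keep their witnesses --- a fairly heavy detour. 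You instead state explicitly the fixpoint property of $I=I_n$ (a chain of subsets of $[1,n]$ must stabilize within $n$ steps, and if every inclusion were strict then $I_n=[1,n]$ is a fixpoint anyway), which the paper uses only in passing (``$\sat(\setof{\lnot \psi_{k'}}{k' \in I_{(i-1)}} \cup \set{\chi_k})$ for all $i \geq 0$, even $I_{(i-1)} = I_n = I$''), and then give a single direct construction: one fresh action $a_k$ per $k\notin I$, routing every $\psi_k$-state to one fixed target $s_{w_k}$ that satisfies $\chi_k$ together with the global constraint. This is cleaner --- it dispenses with $J$, $K$, and $\sigma$ entirely --- while the paper's product-relation construction has the advantage of being reusable verbatim in the compatibility argument of \Cref{prop:compatible}. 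One small caveat: your justification that ``every single-action plan is strongly executable everywhere'' holds only under the literal reading of \Cref{def:plans-exec} (the index range $[1,|\plan|-1]$ is empty when $|\plan|=1$), which is contradicted by \Cref{ex:rel-notions}, where $\stexec(a)=\set{s}$; under the intended semantics, strong executability of $a_k$ at a $\psi_k$-state requires an $a_k$-successor --- which your relation $\R_{a_k}$ provides by construction, so the argument survives once that sentence is repaired.
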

\begin{proof}
    ($\Rightarrow$)
    Suppose that $\sat(\varphi)$ holds, i.e., exists $\model$ s.t.\ $\truthset{\model}{\varphi} = \S$.
    From this assumption, we know that, for every $j \in [1,n]$, $\truthset{\model}{\kh(\psi_i,\chi_i)}=\S$.
    The proof is concluded if $\truthset{\model}{\bigwedge_{i \in I}\lnot \psi_i} \neq \emptyset$.
    We obtain this last result with the help of the following auxiliary lemma:
    \[
        \text{($*$) for all $k \in I_i$, $\truthset{\model}{\chi_k} = \emptyset$ and $\truthset{\model}{\lnot \psi_k} = \S$}
    \]
    The lemma is obtained by induction on the construction of $I_i$.
    The base case is direct.
    Let ${k \in I_0}$; from the definition of $I_0$, we get $\unsat(\chi_k)$; this implies ${\truthset{\model}{\chi_k} = \emptyset}$; which implies $\S = \truthset{\model}{\kh(\psi_k,\chi_k)} = \truthset{\model}{\A \lnot \psi_k} = \truthset{\model}{\lnot \psi_k}$.
    For the inductive step, let $k \in I_{(i+1)} \setminus I_i$.
    From the Inductive Hypothesis, for all ${k' \in I_i}$, $\truthset{\model}{\chi_{k'}} = \emptyset$ and $\truthset{\model}{\lnot \psi_{k'}} = \S$.
    This implies (\dag) $\truthset{\model}{\bigwedge_{k' \in I_{i}} \lnot \psi_{k'}} = \S$.
    From the definition of $I_{(i+1)}$,
        ${\unsat(\setof{\lnot \psi_{k'}}{k' \in I_{i}} \cup \set{\chi_k})}$.
    This is equivalent to $\truthset{\model}{\bigwedge_{k' \in I_{i}} \lnot \psi_{k'}} \subseteq \truthset{\model}{\lnot \chi_k}$.
    From (\dag), $\S \subseteq \truthset{\model}{\lnot \chi_k} = \S$.
    Thus, $\truthset{\model}{\chi_k} = \emptyset$ and $\truthset{\model}{\lnot \psi_k} = \S$.
    Since $I = I_n$; using ($*$) we get $\truthset{\model}{\bigwedge_{i \in I}\lnot \psi_i} = \S \neq \emptyset$.
    This proves~(2).

    \medskip
    \noindent
    ($\Leftarrow$) The proof is by contradiction.
    Suppose (2) and $\unsat(\varphi)$.
    Then, for all $\model$, ($\dagger$) $\truthset{\model}{\varphi} = \emptyset$.
    Let $J = \setof{j \in [1,n]}{\unsat(\set{(\bigwedge_{i \in I}\lnot \psi_i),\psi_j})}$.
    Moreover, let $\model = \tup{\S,\R,\V}$ be s.t.\ $\S$ is the smallest set containing all valuations that make $(\bigwedge_{i \in I}\lnot \psi_i)$ true.
    From (2), we know that $\S \neq \emptyset$ and $\truthset{\model}{\lnot \psi_k} = \S$ for all $k \in I$.
    By induction on the construction of $I=I_n$, we get that $\truthset{\model}{\chi_k} = \emptyset$ for all $k \in I = \bigcup_{i=0}^n I_i$.
    The case for $k \in I_0$ is direct since $\unsat(\chi_k)$, thus $\truthset{\model}{\chi_k} = \emptyset$. For the inductive case, let $k \in I_i \setminus I_{i-1}$, then $\unsat(\setof{\lnot \psi_{k'}}{k' \in I_{(i-1)}} \cup \set{\chi_k})$.
    This is equivalent to say that the implication $((\bigwedge_{k' \in I_{(i-1)}} \lnot \psi_{k'}) \rightarrow \lnot \chi_k)$ is valid. Thus, $\truthset{\model}{\bigwedge_{k' \in I_{(i-1)}} \lnot \psi_{k'}} \subseteq \truthset{\model}{\lnot \chi_k}$.
    By hypothesis, $\truthset{\model}{  \bigwedge_{k' \in I} \lnot \psi_{k'}} = \S$. Thus, $\truthset{\model}{  \bigwedge_{k' \in I_{(i-1)}} \lnot \psi_{k'}} = \S$, and we get $\truthset{\model}{\lnot \chi_k} = \S$ and $\truthset{\model}{\chi_k} = \emptyset$.
    In turn, for all $k \in J$, since $\unsat(\set{(\bigwedge_{i \in I}\lnot \psi_i),\psi_k})$ and $\truthset{\model}{\bigwedge_{i \in I}\lnot \psi_i} = \S$ we can conclude that $\truthset{\model}{\psi_k} = \emptyset$.
    Thus, we have that
        $\truthset{\model}{\kh(\psi_k,\chi_k)} = \truthset{\model}{\A \lnot \psi_k} = \S$, 
            for all $k \in {I \cup J}$.
    Then, from ($\dagger$), exists $K = \setof{k}{\truthset{\model}{\kh(\psi_k,\chi_k)} = \emptyset}$ s.t.\ $\emptyset \subset K \subseteq [1,n]\setminus (I \cup J)$.
    For all $k \in K$, $\truthset{\model}{\psi_k} \neq \emptyset$ since $\sat(\set{(\bigwedge_{i \in I}\lnot \psi_i),\psi_k})$; and $\truthset{\model}{\chi_k} \neq \emptyset$ since $\sat(\setof{\lnot \psi_{k'}}{k' \in I_{(i-1)}} \cup \set{\chi_k})$ for all $i \geq 0$, even $I_{(i-1)} = I_n = I$.
    Without loss of generality, let $K = [1,m]$ and $\model' = \tup{\S,\R',\V}$ be s.t. $\R' = \setof{\R'_{a_j}}{a_j \in \ACT}$, where:
    \begin{align*}
        \R'_{a_j} &=
            \begin{cases}
                {\truthset{\model'}{\psi_j} \times \truthset{\model'}{\chi_j}}
                    & \text{if } j \in K \\
                \R_{a_{(j-m)}}
                    & \text{if } j \notin K.
            \end{cases}
    \end{align*}
    In the definition of $\R'$, it is worth noticing that since $j\notin K$, $\R_{a_{(j-m)}}$ is defined, i.e., $\R_{a_{(j-m)}} \in \R$. 
    Then clearly, for all $k \in K$, $\truthset{\model'}{\kh(\psi_k,\chi_k)} = \S$.
    The claim is that 
        for all $k' \in I \cup J$,
            $\truthset{\model'}{\kh(\psi_{k'},\chi_{k'})} = \S$.
    To prove this claim, consider a function
        $\sigma: \ACT^{*} \to \ACT^{*}$ s.t. 
            $\sigma(\varepsilon) = \varepsilon$, and
            $\sigma(a_k\alpha) = a_{(k+m)}\sigma(\alpha)$.
    For all $\pi \in \PLANS$, if $\truthset{\model}{\psi_{k'}} \subseteq \stexec(\pi)$ and $\R_{\pi}(\truthset{\model}{\psi_{k'}}) \subseteq \truthset{\model}{\chi_{k'}}$, then $\truthset{\model'}{\psi_{k'}} \subseteq \stexec(\sigma(\pi))$ and $\R_{\sigma(\pi)}(\truthset{\model'}{\psi_{k'}}) \subseteq \truthset{\model'}{\chi_{k'}}$ --since the valuation functions for $\model$ and $\model'$ coincide, the truth sets in $\model$ and $\model'$ coincide for formulas with no modalities. 
    Then, $\truthset{\model'}{\kh(\psi_{k'},\chi_{k'})} = \S$.
    But we had assumed $\unsat(\varphi)$.
    Thus, (1) follows.
\end{proof}

The following example illustrates the result in \Cref{prop:sat+}.

\begin{example}\label{ex:global}
Let $\varphi= \kh(p,\bot) \wedge \kh(q,p)$, i.e., $\psi_1=p$, $\psi_2=q$, $\chi_1=\bot$ and $\chi_2=p$. It is clear that $\sat(\varphi)$. Let us build the sets $I_0$, $I_1$ and $I_2$:
\begin{itemize}
    \item $I_0= \set{1}$, as $\unsat(\chi_1)$ and $\sat(\chi_2)$ hold;
    \item $I_1 = \set{1,2}$, since it holds $\unsat(\set{\neg\psi_1,\chi_2}$);
    \item $I_2 = \set{1,2} = I$, as $I_1$ already contains all the indices in $[1,2]$.
\end{itemize}
Thus (as it can be easily checked) we get $\sat(\set{\neg\psi_1,\neg\psi_2})$ (i.e., $\sat(\set{\neg p, \neg q})$).
\end{example}

Interestingly, the result in \Cref{prop:sat+} tells us that the satisfiability of a formula $\kh(\psi_1,\chi_1) \land \dots \land \kh(\psi_n,\chi_n)$ depends solely on the joint satisfiability of its `global' subformulas (cf.~\Cref{prop:ik:universal}); i.e., subformulas $\kh(\psi_i,\chi_i)$ whose postconditions $\chi_i$ are falsified in the context of $\varphi$. 
The satisfiability of the `global' subformulas provides us with the universe, i.e., set of states, on which to build the plans that witness those formulas that are not in $I$, and that are not `trivially' true as a result of their preconditions being falsified in this universe.

Building on~\Cref{prop:sat+}, the function {\sc Sat$^{+}_{\kh}$} in \Cref{alg:sat+} gives us a way of checking whether a formula $\varphi = \kh(\psi_1,\chi_1) \land \dots \land \kh(\psi_n,\chi_n)$ is satisfiable.
The algorithm behind this function makes use of a (propositional) $\sat$ oracle, and the function {\sc Global}.
The $\sat$ oracle tests for pre and postconditions of $\kh$ formulas, as these are propositional formulas.
Intuitively, {\sc Global} iteratively computes the indices in the sets $I_i$ in \Cref{prop:sat+}, each of them corresponding to the `global' subformulas of the input.
Once this is done, {\sc Sat$^{+}_{\kh}$} checks the joint satisfiability of the negation of the preconditions of `global' subformulas.

\begin{proposition}\label{prop:alg:sat+}
    Let $\varphi$ be as in \Cref{prop:sat+}; \Cref{alg:sat+} solves $\sat(\varphi)$.
\end{proposition}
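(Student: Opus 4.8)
The plan is to decompose the correctness argument into the two operations that {\sc Sat$^{+}_{\kh}$} performs: first it calls {\sc Global} to compute an index set, and then it issues a single propositional satisfiability query on the conjunction of the negated preconditions indexed by that set. Granting that {\sc Global} returns exactly $I = I_n$ as defined in \Cref{prop:sat+}, the final query decides $\sat(\bigwedge_{i \in I}\lnot\psi_i)$, and \Cref{prop:sat+} then gives that this answer coincides with $\sat(\varphi)$. Hence the entire argument reduces to showing that {\sc Global} computes $I_n$, together with the observation that every oracle call is legitimate: since $\depth(\varphi)=1$, each $\psi_i$ and $\chi_i$ is propositional, so every test of the form $\unsat(\cdot)$ appearing in the construction is a propositional unsatisfiability query that the SAT oracle can answer.

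To show that {\sc Global} returns $I_n$, I would first record the monotonicity of the defining recursion: by construction $I_{i-1}\subseteq I_i$, and enlarging the premise set $\{\lnot\psi_{k'}\mid k'\in I_{i-1}\}$ can only preserve unsatisfiability, so the sequence $I_0\subseteq I_1\subseteq\cdots\subseteq I_n$ is non-decreasing and bounded by $[1,n]$. I would then establish a loop invariant for the main loop of {\sc Global}: after its $j$-th pass, the accumulated set equals $I_j$. The base case verifies that the initialisation produces $I_0=\{k\mid\unsat(\chi_k)\}$, each membership being decided by one oracle call. The inductive step matches a single pass: using the current set (equal to $I_j$ by hypothesis) as the premise, the loop tests, for each $k\in[1,n]$, whether $\unsat(\{\lnot\psi_{k'}\mid k'\in I_j\}\cup\{\chi_k\})$, and adds exactly the passing indices, which is precisely the definition of $I_{j+1}$.

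For termination and the identification of the output with $I_n$, I would argue that because the sequence is non-decreasing and bounded, it reaches a fixed point at some index $i^\star\le n$, after which the recursion is stationary; consequently $I_{i^\star}=I_n=I$. The loop detects this fixed point by halting as soon as a pass adds no new index, and the invariant then yields that the returned set is $I$. Finally I would note that the procedure makes only $O(n^2)$ oracle calls, so it is a well-defined reduction with a propositional satisfiability oracle.

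The step I expect to require the most care is the inductive step of the loop invariant. The delicate point is to ensure that, within one pass, every membership test uses the set as it stood at the start of the pass (namely $I_j$) rather than a partially updated version; otherwise the pass could add indices computed against an intermediate set and overshoot $I_{j+1}$. If {\sc Global} updates its set in place, I would have to verify, again from monotonicity of the unsatisfiability test, that such in-place updates never add an index outside $I_{j+1}$, so the invariant is preserved regardless of the order in which the indices are examined.
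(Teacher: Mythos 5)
Your proposal is correct and takes essentially the same route the paper intends: the paper states this proposition without a written proof, relying on exactly the reduction you spell out --- {\sc Global} maintains the invariant that after pass $i$ the set $I$ equals $I_i$ of \Cref{prop:sat+} (and $\Psi = \setof{\lnot\psi_k}{k \in I_i}$; note that in \Cref{alg:sat+} the updates to $I$ and $\Psi$ happen only after each inner loop completes, which disposes of the in-place-update worry you flag), so the final oracle call decides $\sat(\bigwedge_{i \in I_n}\lnot\psi_i)$, which coincides with $\sat(\varphi)$ by \Cref{prop:sat+}. One small mismatch: {\sc Global} has no early exit at a fixed point --- it simply runs all $n+1$ passes $i = 0,\dots,n$ --- so your fixed-point/halting paragraph is unnecessary (and slightly misdescribes the pseudocode); the loop invariant at $i = n$ already identifies the returned set with $I_n$.
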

  


\begin{algorithm}[!t]
    \caption{{\sc Sat}$^{+}_{\kh}$}
    \label{alg:sat+}
    \begin{algorithmic}[1]
    \Require
        $\varphi = \kh(\psi_1,\chi_1) \land \dots \land \kh(\psi_n,\chi_n)$ and
        $\depth(\varphi) = 1$
    \Function{Global}{$\varphi$}
        \State $I,\Psi \gets \emptyset,\emptyset$
        \For{$i \gets 0$ {\bf to} $n$}
            \State $K \gets \emptyset$
            \For{$k \gets 1$ {\bf to} $n$}
                \IIf {not $\sat(\Psi \cup \{\chi_k\})$}
                    $K \gets K \cup \{k\}$
                \EndIIf
            \EndFor
            \State $I \gets {I \cup K}$
            \State $\Psi \gets \Psi \cup \setof{\lnot \psi_{k}}{k \in K}$
        \EndFor
        \State \Return $I$
    \EndFunction
    \Ensure
    $\Call{Global}{\varphi} = {I_0 \cup \dots \cup I_n}$ where $I_{i}$ is as in \Cref{prop:sat+}
    \Require
        $\varphi = \kh(\psi_1,\chi_1) \land \dots \land \kh(\psi_n,\chi_n)$ and
        $\depth(\varphi) = 1$
    \Function{Sat$^{+}_{\kh}$}{$\varphi$}
        \State \Return $\sat(\setof{\lnot \psi_i}{i \in \Call{Global}{\varphi}})$
    \EndFunction
    \Ensure
        \Call{Sat$^{+}_{\kh}$}{$\varphi$} iff $\sat(\varphi)$
    \end{algorithmic}
\end{algorithm}

Let us now move to determining the satisfiability conditions of a formula $\lnot \kh(\psi_1,\chi_1) \land \dots \land \lnot \kh(\psi_n,\chi_n)$ in leaf normal form.
\Cref{prop:sat-} establishes that, for any such a formula, it is enough to check whether each conjunct $\psi_i\land\lnot\chi_i$ is individually satisfiable.
Note that this satisfiability check is purely propositional.

\begin{proposition} \label{prop:sat-}
    Let $\varphi = \lnot \kh(\psi_1,\chi_1) \land \dots \land \lnot \kh(\psi_n,\chi_n)$ be s.t.\ $\depth(\varphi) = 1$;
    it follows that
        $\sat(\varphi)$ iff
        for all $i \in [1, n]$, $\sat(\psi_i \land \lnot \chi_i)$.
\end{proposition}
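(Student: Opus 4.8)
The plan is to prove both directions directly, exploiting the fact that every formula $\lnot\kh(\psi,\chi)$ is \emph{global}: by the case split in \Cref{def:khsemantics} its truth set in any LTS is either $\S$ or $\emptyset$. Hence $\truthset{\model}{\varphi}$ is itself either $\S$ or $\emptyset$, so $\sat(\varphi)$ amounts to producing a single $\model$ with $\truthset{\model}{\kh(\psi_i,\chi_i)} = \emptyset$ for every $i \in [1,n]$; that is, a model in which \emph{no} plan witnesses any $\kh(\psi_i,\chi_i)$. Note also that $\depth(\varphi)=1$ forces each $\psi_i,\chi_i$ to be propositional, so the checks $\sat(\psi_i \land \lnot\chi_i)$ are purely propositional.

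For the ($\Rightarrow$) direction I would start from a model $\model$ with $\truthset{\model}{\varphi} = \S$ and fix an index $i$. Since $\truthset{\model}{\kh(\psi_i,\chi_i)} = \emptyset$, the empty plan $\varepsilon$ is in particular not a witness. As $\stexec(\varepsilon) = \S$ and $\R_\varepsilon$ is the identity, the only way for $\varepsilon$ to fail the witness conditions of \Cref{def:khsemantics} is $\truthset{\model}{\psi_i} \not\subseteq \truthset{\model}{\chi_i}$; that is, some state lies in $\truthset{\model}{\psi_i \land \lnot\chi_i}$, giving $\sat(\psi_i \land \lnot\chi_i)$.

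For the harder ($\Leftarrow$) direction I would build an explicit model. The subtle point, and the main obstacle, is that a non-empty plan can be a witness \emph{vacuously}: if its accessibility relation sends $\truthset{\model}{\psi_i}$ to the empty set, then $\R_\plan(\truthset{\model}{\psi_i}) \subseteq \truthset{\model}{\chi_i}$ holds trivially, and since on length-one plans strong executability is vacuous, such plans cannot be ruled out through $\stexec$. The fix is to force every non-empty plan to reach \emph{everything}: take $\S$ to be the set of all propositional valuations over the finitely many symbols occurring in $\varphi$, let $\V$ be the induced valuation, and set $\R_a = \S \times \S$ for every $a \in \ACT$. Since each $\sat(\psi_i \land \lnot\chi_i)$ holds, $\S$ contains a valuation satisfying $\psi_i \land \lnot\chi_i$, so $\truthset{\model}{\psi_i} \neq \emptyset$ and $\truthset{\model}{\chi_i} \neq \S$.

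It then remains to verify that no plan witnesses $\kh(\psi_i,\chi_i)$. The empty plan fails because the $\psi_i \land \lnot\chi_i$ valuation witnesses $\truthset{\model}{\psi_i} \not\subseteq \truthset{\model}{\chi_i}$. For any non-empty $\plan$, composing full relations keeps $\R_\plan = \S \times \S$, so $\R_\plan$ maps the non-empty set $\truthset{\model}{\psi_i}$ onto all of $\S$; as $\truthset{\model}{\chi_i} \neq \S$, the inclusion $\R_\plan(\truthset{\model}{\psi_i}) \subseteq \truthset{\model}{\chi_i}$ fails. Hence $\truthset{\model}{\kh(\psi_i,\chi_i)} = \emptyset$ for all $i$, so $\truthset{\model}{\varphi} = \S$ and $\sat(\varphi)$ follows. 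I expect the only care needed is confirming strong executability and the range computation for full relations, both of which are routine once the full-relation idea is in place.
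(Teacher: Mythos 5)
Your proof is correct, and its forward direction is essentially the paper's: both argue that since $\truthset{\model}{\kh(\psi_i,\chi_i)} = \emptyset$, the empty plan $\varepsilon$ in particular is not a witness, and since $\stexec(\varepsilon) = \S$ and $\R_\varepsilon$ is the identity, the only condition that can fail is $\truthset{\model}{\psi_i} \subseteq \truthset{\model}{\chi_i}$, yielding $\sat(\psi_i \land \lnot\chi_i)$. The genuine difference is the countermodel in the ($\Leftarrow$) direction. The paper takes a set of states realizing each $\psi_i \land \lnot\chi_i$ and sets every $\R_a = \emptyset$, so that non-empty plans are disqualified via strong executability ($\stexec(\plan) = \emptyset$ for $\plan \neq \varepsilon$) and only $\varepsilon$ must be refuted, via the postcondition. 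You instead take all valuations over the symbols of $\varphi$ and set every $\R_a = \S \times \S$; then $\R_\plan = \S \times \S$ for every non-empty $\plan$, and such plans fail because $\R_\plan(\truthset{\model}{\psi_i}) = \S \nsubseteq \truthset{\model}{\chi_i}$, using $\truthset{\model}{\psi_i} \neq \emptyset$ and $\truthset{\model}{\chi_i} \neq \S$, both guaranteed by $\sat(\psi_i \land \lnot\chi_i)$ and the fact that $\depth(\varphi)=1$ makes all pre/postconditions propositional. Your variant buys robustness at the exact point you flagged: under the literal statement of \Cref{def:plans-exec}, where $i$ ranges over $[1,\card{\plan}-1]$, every length-one plan is vacuously SE everywhere, so in the paper's empty-relation model a single action $a$ would be a vacuous witness ($\R_a(\truthset{\model}{\psi_i}) = \emptyset \subseteq \truthset{\model}{\chi_i}$) and the construction would break; the paper implicitly relies on the intended reading with $i \in [0,\card{\plan}-1]$ (the one consistent with \Cref{ex:rel-notions}, where $\stexec(a) = \set{s}$ rather than $\S$), under which $\stexec(\plan) = \emptyset$ for all non-empty $\plan$. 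Your full-relation model is correct under either reading, since the refutation lives entirely in the postcondition rather than in $\stexec$; the only cost is an exponentially larger state set, which is harmless for this proposition (and one could shrink it to one state per satisfiable $\psi_i \land \lnot\chi_i$, as the paper does).
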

\begin{proof}
    ($\Rightarrow$)
    The proof is by contradiction.
    Suppose that ($\dagger$) $\sat(\varphi)$ and for some  $i \in [1,n]$ we have ($\ddagger$) $\unsat(\psi_i \land \lnot \chi_i)$. Let $\model$ be a model such that
    $\truthset{\model}{\varphi} \neq \emptyset$, which exists by ($\dagger$).
    Then, $\truthset{\model}{\kh(\psi_i,\chi_i)} = \emptyset$.
    From this, we get $\truthset{\model}{\psi_i} \neq \emptyset$; otherwise $\truthset{\model}{\kh(\psi_i,\chi_i)} = \S$.
    From ($\ddagger$), we know that $\truthset{\model}{\psi_i} \subseteq \truthset{\model}{\chi_i}$.
    Since $\varepsilon \in \PLANS$, we have  $\truthset{\model}{\psi_i} \subseteq \stexec(\varepsilon) = \S$ and $\truthset{\model}{\psi_i} = \R_{\varepsilon}(\truthset{\model}{\psi_i}) \subseteq \truthset{\model}{\chi_i}$.
    But this means $\truthset{\model}{\kh(\psi_i,\chi_i)} = \S$; which is a contradiction.
    Thus, $\R_{\varepsilon}\truthset{\model}{\psi_i} \nsubseteq \truthset{\model}{\chi_i}$; i.e., $\truthset{\model}{\psi_i} \nsubseteq \truthset{\model}{\chi_i}$. This means $\truthset{\model}{\psi_i \land \lnot \chi_i} \neq \emptyset$.
    This establishes $\sat(\psi_i \land \lnot \chi_i)$.

    \medskip
    \noindent
    ($\Leftarrow$)
    Suppose that ($\dagger$) for all $i \in [1,n]$, $\sat(\psi_i \wedge \lnot\chi_i)$.
    Let $\model = \tup{\S, \R, \V}$ where:
        ($\ddagger$) $\S$ is s.t.\ for all $i$, $\tset{\psi_i \land \lnot \chi_i}^{\model} \neq \emptyset$; and
        ($\mathsection$) for all $\R_{a} \in \R$, $\R_{a} = \emptyset$.
    From ($\dagger$), we know that at least one $\S$ exists, as every $\psi_i$ and $\chi_i$ are propositional; thus, each satisfiable conjunction can be sent to a different $s \in\S$.
    From ($\mathsection$), we know for all $\pi \in \PLANS$, $\stexec(\pi) \neq \emptyset$ iff $\pi = \varepsilon$.
    From ($\ddagger$) and ($\mathsection$), we know that $\truthset{\model}{\psi_i} = \R_{\varepsilon}\tset{\psi_i}^{\model} \nsubseteq \tset{\chi_i}^{\model}$.
    This means that $\tset{\kh(\psi_i,\chi_i)}^{\model} = \emptyset$, for all $i \in [1,n]$.
    Hence $\tset{\varphi}^{\model} = \S$ which implies $\sat(\varphi)$.
\end{proof}

The key idea behind \Cref{prop:sat-} is to build a discrete universe to force the only possible witness of a formula of the form $\kh(\psi_i,\chi_i)$ to be the empty plan.
If in this discrete universe we always have at hand a state which satisfies $\psi_i \land \lnot \chi_i$, then, the empty plan cannot be a witness for $\kh(\psi_i,\chi_i)$.
If the latter is the case, then the satisfiability of $\lnot\kh(\psi_i,\chi_i)$ is ensured.
Building on this result, we define, in \Cref{alg:sat-}, a function {\sc Sat$^{-}_{\kh}$} to check the satisfiability of a formula $\lnot \kh(\psi_1,\chi_1) \land \dots \land \lnot \kh(\psi_n,\chi_n)$ in leaf normal form.
The function proceeds by traversing each subformula $\kh(\psi_i,\chi_i)$ and checking the satisfiability of $\psi_i \land \lnot\chi_i$.

\begin{proposition}\label{prop:alg:sat-}
    Let $\varphi$ be as in \Cref{prop:sat-}; \Cref{alg:sat-} solves $\sat(\varphi)$.
\end{proposition}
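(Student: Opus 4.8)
The plan is to show that the function {\sc Sat$^{-}_{\kh}$} of \Cref{alg:sat-} is nothing more than a direct implementation of the equivalence established in \Cref{prop:sat-}, so that its correctness follows almost immediately from that proposition.

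First I would record the one structural fact that makes the oracle calls legitimate: because $\varphi$ is in leaf normal form, every precondition $\psi_i$ and every postcondition $\chi_i$ has modal depth $0$ and is therefore a propositional formula. Consequently each conjunction $\psi_i \land \lnot \chi_i$ is propositional, and each test $\sat(\psi_i \land \lnot \chi_i)$ can be resolved by a single call to the propositional $\sat$ oracle. This is the only place where the leaf-normal-form hypothesis is needed.

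Next I would trace the control flow of the algorithm: {\sc Sat$^{-}_{\kh}$} iterates over $i \in [1,n]$, querying the oracle on $\psi_i \land \lnot \chi_i$, and returns \emph{true} precisely when every such query succeeds; that is, its output equals the truth value of $\bigwedge_{i=1}^{n} \sat(\psi_i \land \lnot \chi_i)$. I would also dispatch the boundary case $n = 0$: the loop performs no iterations and the function returns \emph{true}, which agrees with the convention that $\sat(\emptyset)$ is true. Then I invoke \Cref{prop:sat-}, which asserts exactly that $\sat(\varphi)$ iff for all $i \in [1,n]$, $\sat(\psi_i \land \lnot \chi_i)$; chaining this with the preceding observation shows that {\sc Sat$^{-}_{\kh}$}$(\varphi)$ returns \emph{true} iff $\sat(\varphi)$, which is the claim.

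I do not expect a genuine obstacle here, since all the semantic content was already discharged in \Cref{prop:sat-}; the only points requiring care are the bookkeeping details just listed, namely that the subformulas queried are propositional, that the empty-conjunction base case is handled, and that the algorithm aggregates the oracle answers with a conjunction rather than a disjunction. If a complexity statement were also wanted alongside correctness, I would add that the procedure makes exactly $n$ oracle calls on inputs of size linear in $\size{\varphi}$, placing it in $\Poly$ relative to a propositional $\sat$ oracle.
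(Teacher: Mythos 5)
Your proof is correct and matches the paper's (implicit) justification: the paper states \Cref{prop:alg:sat-} without a separate proof precisely because \Cref{alg:sat-} is a direct implementation of the criterion in \Cref{prop:sat-}, which is exactly the reduction you spell out. Your added bookkeeping (propositionality of the $\psi_i \land \lnot\chi_i$ from leaf normal form, the empty-conjunction case, conjunctive aggregation of oracle answers) is sound and consistent with the paper's intent.
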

  


\begin{algorithm}[!t]
    \caption{{\sc Sat}$^{-}_{\kh}$}
    \label{alg:sat-}
    \begin{algorithmic}[1]
    \Require
        $\varphi = \lnot\kh(\psi_1,\chi_1) \land \dots \land \lnot\kh(\psi_n,\chi_n)$ and
        $\depth(\varphi) = 1$
    \Function{Sat$^{-}_{\kh}$}{$\varphi$}
        \State $r \gets \top$
        \For{$i \gets 1$ {\bf to} $n$}
            \State $r \gets r$ and $\sat(\psi_i \land \lnot \chi_i)$
        \EndFor
        \State \Return $r$
    \EndFunction
    \Ensure
        \Call{Sat$^{-}_{\kh}$}{$\varphi$} iff $\sat(\varphi)$
    \end{algorithmic}
\end{algorithm}

We are now ready to extend the results in \Cref{prop:sat+,prop:sat-} to work out the joint satisfiability of a formula of the form $\varphi^{+} = \kh(\psi_1,\chi_1) \land \dots \land \kh(\psi_n,\chi_n)$, and a formula of the form $\varphi^{-} = \neg\kh(\psi'_1,\chi'_1) \land \dots \land \lnot \kh(\psi'_m,\chi'_m)$, both in leaf normal form.
The main difficulty is how to ``build'' witnesses for the subformulas $\kh(\psi_i,\chi_i)$ of $\varphi^{+}$ in a way such that they do not yield witnesses for the subformulas $\lnot\kh(\psi'_j,\chi'_j)$ of $\varphi^{-}$.
We show that the key to the solution hinges on ``composition''. 
We start with a preliminary definition.

\begin{definition} \label{def:composition}
    Let
        $\varphi = \kh(\psi_1,\chi_1) \land \dots \land \kh(\psi_n,\chi_n)$ and
        $\psi$ be a formula; we define 
        $\C(\varphi,\psi) = \bigcup_{i \geq 0}\C_i$ where:
        \begin{align*}
            \C_0 &=
                \setof{(x,x)}{x \in [1,n]} \\
            \C_{(i+1)} &=
                \C_i \cup
                \setof{(x,z)}{
                    (x,y) \in \C_i
                    \text{, }
                    z \in [1,n]
                    \text{, and }
                    \unsat(\set{\psi, \chi_{y}, \lnot \psi_z})
                }.
        \end{align*}
\end{definition}

In words, $\C(\varphi,\psi)$ captures the notion of composition of formulas $\kh(\psi,\chi)$ and $\kh(\psi',\chi')$ into a formula $\kh(\psi,\chi')$.
This composition is best explained by recalling the validity of $(\kh(\psi,\chi)\land\A(\chi \to \psi')\land\kh(\psi',\chi')) \to \kh(\psi,\chi')$ (see, e.g.~\cite{Wang15lori,Wang2016}). 
The definition of $\C(\varphi,\psi)$ records the conjuncts of~$\varphi$ which can be composed in this sense.
Below, we list some properties of $\C(\varphi,\psi)$.

\begin{proposition} \label{prop:chain}
    Let $\varphi$ and $\psi$ be as in \Cref{def:composition};
    if
        $(x,y) \in \C(\varphi,\psi)$,
    then,
        for any model $\model$, it holds
         $\truthset{\model}{\varphi \land \A\psi} \subseteq \truthset{\model}{\kh(\psi_x,\chi_y)}$.
\end{proposition}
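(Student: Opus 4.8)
The plan is to argue by induction on the stage $i$ at which the pair $(x,y)$ enters the union $\C(\varphi,\psi) = \bigcup_{i \geq 0}\C_i$, composing the relevant $\kh$-conjuncts of $\varphi$ one at a time via \Cref{prop:pre:post}(2). Before starting the induction I would dispose of the trivial case: fix an arbitrary model $\model$; if $\truthset{\model}{\varphi \land \A\psi} = \emptyset$ the desired inclusion holds vacuously. So I may assume $\truthset{\model}{\varphi \land \A\psi} \neq \emptyset$. Since each conjunct $\kh(\psi_j,\chi_j)$ of $\varphi$ is `global' (its truth set is either $\S$ or $\emptyset$ by \Cref{def:khsemantics}), this forces $\truthset{\model}{\varphi} = \S$; and by \Cref{prop:nts:universal} it also forces $\truthset{\model}{\psi} = \S$. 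These two equalities are the only facts about $\model$ I will use in the induction.

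For the base case, $(x,y) \in \C_0$ means $x = y$, so $\kh(\psi_x,\chi_y) = \kh(\psi_x,\chi_x)$ is literally a conjunct of $\varphi$; hence $\truthset{\model}{\varphi \land \A\psi} \subseteq \truthset{\model}{\varphi} \subseteq \truthset{\model}{\kh(\psi_x,\chi_x)}$, as required. For the inductive step I would take $(x,z) \in \C_{(i+1)} \setminus \C_i$, so by \Cref{def:composition} there is $(x,y) \in \C_i$ with $z \in [1,n]$ and $\unsat(\set{\psi,\chi_y,\lnot\psi_z})$. The key observation is that this unsatisfiability statement is exactly the validity of $(\psi \land \chi_y) \to \psi_z$, which yields $\truthset{\model}{\psi} \cap \truthset{\model}{\chi_y} \subseteq \truthset{\model}{\psi_z}$; combined with $\truthset{\model}{\psi} = \S$ this collapses to the semantic inclusion $\truthset{\model}{\chi_y} \subseteq \truthset{\model}{\psi_z}$.

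Now I apply \Cref{prop:pre:post}(2) to compose: with that inclusion in hand the proposition gives $\truthset{\model}{\kh(\psi_x,\chi_y)} \cap \truthset{\model}{\kh(\psi_z,\chi_z)} \subseteq \truthset{\model}{\kh(\psi_x,\chi_z)}$. The inductive hypothesis supplies $\truthset{\model}{\varphi \land \A\psi} \subseteq \truthset{\model}{\kh(\psi_x,\chi_y)}$, while $\kh(\psi_z,\chi_z)$ being a conjunct of $\varphi$ supplies $\truthset{\model}{\varphi \land \A\psi} \subseteq \truthset{\model}{\varphi} \subseteq \truthset{\model}{\kh(\psi_z,\chi_z)}$; intersecting, the left-hand side lies in $\truthset{\model}{\kh(\psi_x,\chi_y)} \cap \truthset{\model}{\kh(\psi_z,\chi_z)} \subseteq \truthset{\model}{\kh(\psi_x,\chi_z)}$, which closes the induction.

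The step I expect to require the most care is the bookkeeping that turns the purely propositional side condition $\unsat(\set{\psi,\chi_y,\lnot\psi_z})$ from \Cref{def:composition} into the model-level inclusion $\truthset{\model}{\chi_y} \subseteq \truthset{\model}{\psi_z}$ needed to invoke composition. This is precisely where the universal modality $\A\psi$ earns its keep: without $\truthset{\model}{\psi} = \S$ one would only obtain $\truthset{\model}{\psi}\cap\truthset{\model}{\chi_y} \subseteq \truthset{\model}{\psi_z}$, which is too weak to chain the $\kh$-formulas. Reducing to the nonempty case at the outset is what makes this clean; everything else is a routine application of \Cref{prop:pre:post}(2).
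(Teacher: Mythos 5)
Your proof is correct and follows essentially the same route as the paper's: both reduce to the case $\truthset{\model}{\varphi \land \A\psi} = \S$ using the globality of $\kh$-formulas, turn each side condition $\unsat(\set{\psi,\chi_y,\lnot\psi_z})$ into the inclusion $\truthset{\model}{\chi_y} \subseteq \truthset{\model}{\psi_z}$, and compose via \Cref{prop:pre:post}(2). The only difference is organizational: the paper first proves an auxiliary characterization of $\C_i$ in terms of sequences of indices (which it reuses later in the proof of \Cref{prop:compatible}) and then composes along the whole chain at once, whereas you induct directly on the stage at which a pair enters $\C(\varphi,\psi)$ and compose one step at a time.
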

\begin{proof}
    We start by stating and proving an auxiliary lemma:
        ($*$) $(x,y) \in \C_i$ iff
        there is a non-empty sequence $\plan$ of indices in $[1,n]$ s.t.:
        \begin{enumerate}
            \item[(\dag)] $x=\plan_1$ and $y=\plan_{|\plan|}$; and
            \item[(\ddag)] 
            for all $j \in [1,|\plan|-1]$,
                $\unsat(\set{\psi, \chi_{\plan_j}, \lnot \psi_{\plan_{(j+1)}}})$.
        \end{enumerate}
    The proof of this lemma is by induction on $i$.
    The base case for ($*$) is $i=0$.
    We know that $(x,x) \in \C_0$, the sequence containing just $x$ satisfies (\dag) and (\ddag).
    Conversely, we know that any sequence $\plan$ of indices in $[1,n]$ s.t.\ $|\plan| = 1$ satisfies (\dag) and (\ddag); it is immediate that $(\plan_1,\plan_1) \in \C_0$.
    This proves the base case.
    For the inductive step, let
        $(x,z) \in \C_{(i+1)}$,
        $(x,y) \in \C_i$,
        $z \in [1,n]$, and
        $\unsat(\set{\psi, \chi_y, \lnot \psi_z})$.
    From the Inductive Hypothesis, there is $\plan$ that satisfies (\dag) and (\ddag).
    Immediately, $\plan' = \plan{z}$ also satisfies (\dag) and (\ddag).

    \medskip
    \noindent
    It is easy to see that, if there is $\plan$ satisfying (\dag) and (\ddag), then, ($\mathsection$) for every model $\model$ and $j \in [1,|\plan|-1]$, $\truthset{\model}{\A\psi} = \S$ implies $\truthset{\model}{\chi_{\plan_j}} \subseteq \truthset{\model}{\psi_{\plan_{(j+1)}}}$.


    \medskip
    \noindent
    Let us now resume with the main proof.
    Let $(x,y) \in \C(\varphi,\psi)$ and $\model$ be any model.
    The result is direct if $\truthset{\model}{\varphi \land \A\psi} = \emptyset$.
    Thus, consider $\truthset{\model}{\varphi \land \A\psi} \neq \emptyset$; i.e., s.t.\ $\truthset{\model}{\varphi \land \A\psi} = \S$.
    From ($*$), we know that exists a sequence $\plan$ of indices in $[1,n]$ that satisfies (\dag) and (\ddag).
    Then, for all $j \in [1,|\plan|-1]$, $\truthset{\model}{\chi_{\plan_j}} \subseteq \truthset{\model}{\psi_{\plan_{(j+1)}}}$.
    Using \Cref{prop:composition}, $\truthset{\model}{\varphi \land \A\psi} \subseteq \bigcap_{j=1}^{|\plan|} \truthset{\model}{\kh(\psi_{\plan_j},\chi_{\plan_j})} \subseteq \truthset{\model}{\kh(\psi_x,\chi_y)}$.
\end{proof}

\begin{proposition} \label{prop:alternative:composition}
    Let
        $\varphi = \kh(\psi_1,\chi_1) \land \dots \land \kh(\psi_n,\chi_n)$ and
        $\psi$ be a formula;
        $\C(\varphi,\psi)$ is the smallest set s.t.:
        (1) for all $x \in [1,n]$, $(x,x) \in \C(\varphi,\psi)$; and
        (2) if
            $\set{(x,y_0),(y_1,z)} \subseteq \C(\varphi,\psi)$ and
            $\unsat(\set{\psi, \chi_{y_0}, \lnot \psi_{y_1}})$, then,
            $(x,z) \in \C(\varphi,\psi)$.
\end{proposition}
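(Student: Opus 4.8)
The plan is to prove the two halves implicit in the phrase ``smallest set'': first that $\C(\varphi,\psi)$ itself satisfies the closure conditions~(1) and~(2), and second that it is contained in every set that satisfies them. Throughout I abbreviate $\C = \C(\varphi,\psi)$, and I rely on the chain characterization already established as lemma~($*$) in the proof of \Cref{prop:chain}: $(x,y) \in \C$ iff there is a non-empty sequence $\plan$ of indices in $[1,n]$ with $\plan_1 = x$, $\plan_{|\plan|} = y$, and $\unsat(\set{\psi, \chi_{\plan_j}, \lnot \psi_{\plan_{(j+1)}}})$ for every $j \in [1,|\plan|-1]$. I will call such a $\plan$ a \emph{chain} from $x$ to $y$.

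First I would check that $\C$ satisfies~(1) and~(2). Condition~(1) is immediate, since $\C_0 = \setof{(x,x)}{x \in [1,n]} \subseteq \C$ by construction. For condition~(2), assume $(x,y_0),(y_1,z) \in \C$ and $\unsat(\set{\psi, \chi_{y_0}, \lnot \psi_{y_1}})$. By the chain characterization there are chains $\plan$ from $x$ to $y_0$ and $\plan'$ from $y_1$ to $z$. I would form the concatenation $\plan\plan'$: it runs from $x$ to $z$, every gluing condition internal to $\plan$ and to $\plan'$ holds by hypothesis, and the single new junction condition is exactly $\unsat(\set{\psi, \chi_{y_0}, \lnot \psi_{y_1}})$, which is assumed. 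Hence $\plan\plan'$ is a chain from $x$ to $z$, so $(x,z) \in \C$.

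Second I would show minimality: let $C'$ be any set satisfying~(1) and~(2); I claim $\C_i \subseteq C'$ for all $i$, by induction on $i$, whence $\C = \bigcup_i \C_i \subseteq C'$. The base case $\C_0 \subseteq C'$ is exactly condition~(1). For the step, take $(x,z) \in \C_{(i+1)} \setminus \C_i$; by \Cref{def:composition} there is $(x,y) \in \C_i$ with $\unsat(\set{\psi, \chi_{y}, \lnot \psi_z})$. The inductive hypothesis gives $(x,y) \in C'$, and condition~(1) gives $(z,z) \in C'$; applying condition~(2) with $y_0 = y$ and $y_1 = z$ then yields $(x,z) \in C'$, as required.

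The one step that needs care is condition~(2) for $\C$ itself, because \Cref{def:composition} only ever appends a \emph{single} index on the right, whereas~(2) composes two arbitrary members of $\C$. The chain characterization resolves this cleanly by turning composition into concatenation of sequences; without it one would instead argue by a secondary induction on the stage $i$ at which the right-hand pair $(y_1,z)$ first enters $\C$, repeatedly appending one index at a time to the chain ending at $x$. Both routes work, and I would favour the chain lemma since it makes the junction bookkeeping transparent.
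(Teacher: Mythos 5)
Your proof is correct. For the record: the paper states \Cref{prop:alternative:composition} \emph{without any proof} (it is treated as an immediate consequence of \Cref{def:composition}, and is used only to justify the matrix computation in {\sc Plans}), so there is no official argument to compare against; your write-up fills a genuine gap, and does so using exactly the machinery the paper itself develops, namely the chain lemma ($*$) inside the proof of \Cref{prop:chain}. Both halves are sound: closure under (1) is trivial from $\C_0$; closure under (2) follows by concatenating a chain from $x$ to $y_0$ with one from $y_1$ to $z$, the hypothesis $\unsat(\set{\psi,\chi_{y_0},\lnot\psi_{y_1}})$ supplying precisely the one new junction condition; and minimality is the standard induction on the stages $\C_i$, where your instantiation of (2) with $y_0=y$, $y_1=z$ and the reflexive pair $(z,z)$ from (1) is exactly the right move, since \Cref{def:composition} only ever extends a pair by a single index on the right. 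You also correctly isolate the one point that needs care --- condition (2) composes two arbitrary members of $\C$ while the inductive definition is single-step --- and both of your proposed resolutions (the chain lemma, or a secondary induction on the stage at which $(y_1,z)$ enters $\C$) work. One small completeness remark: the direction of ($*$) you need for closure under (2) is ``chain $\Rightarrow$ membership,'' and the paper's proof of ($*$) spells this out explicitly only for chains of length one (the converse part of the base case); for a self-contained argument you should add the one-line induction on chain length showing that a chain $\plan$ from $x$ to $z$ puts $(x,z)$ in $\C_{|\plan|-1}$. This is routine, but since you lean on the lemma as an ``iff,'' it is worth noting that the cited proof only fully verifies the direction you are not using.
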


\begin{algorithm}[!t]
    \caption{\sc Plans}
    \label{alg:plans}
    \begin{algorithmic}[1]
    \Require
        $\varphi^{+}$ is as in \Cref{def:compatible}
    \Function{Plans}{$\varphi^{+},\psi$}
        \State $\C \gets \Call{False}{n,n}$
            \Comment{$\C$ is an $n \times n$ matrix whose entries are all set to $\bot$ (false)}
        \For{$x \gets 1$ {\bf to} $n$}
           \State $\C(x,x) \gets \top$
        \EndFor
        \For{$x \gets 1$ {\bf to} $n$}
            \Comment{compute all $(x,z) \in \C(\varphi^{+},\psi)$}
            \For{$z \gets 1$ {\bf to} $n$}
                \For{$y_0 \gets 1$ {\bf to} $n$}
                    \For{$y_1 \gets 1$ {\bf to} $n$}
                        \State $\C(x,z) \gets
                            \C(x,z) \text{ or (}
                            \C(x,y_0) \text{ and}
                            \C(y_1,z) \text{ and }
                            \text{not} \sat(\set{\psi,\chi_{y_0}, \lnot \psi_{y_1}}))$
                    \EndFor
                \EndFor
            \EndFor
        \EndFor
        \State \Return $\setof{(x,y)}{\C(x,y)=\top}$
    \EndFunction
    \Ensure
        $\Call{Plans}{\varphi^{+},\psi} = \C(\varphi^{+},\psi)$ 
    \end{algorithmic}
\end{algorithm}

The function {\sc Plans} in \Cref{alg:plans} can be used to compute the set $\C(\varphi,\psi)$ in \Cref{def:composition}.
This function looks into whether a pair of indices belongs to this set using the result in \Cref{prop:alternative:composition}.

\begin{figure}[t!]
    \noindent
    \begin{minipage}[t]{.245\textwidth}
        \begin{center}
            \begin{tabular}{c|c|c|c}
                    & $\chi_1$ & $\chi_2$ & $\chi_3$
                \tabularnewline
                \hline
                $\psi_1$ & $\top$ & $\bot$ & $\bot$
                \tabularnewline
                \hline
                $\psi_2$ & $\bot$ & $\top$ & $\bot$
                \tabularnewline
                \hline
                $\psi_3$ & $\bot$ & $\bot$ & $\top$
                \tabularnewline
                \hline
            \end{tabular}
            \caption*{initial step}
        \end{center}
    \end{minipage}
    \begin{minipage}[t]{.245\textwidth}
        \begin{center}
            \begin{tabular}{c|c|c|c}
                    & $\chi_1$ & $\chi_2$ & $\chi_3$
                \tabularnewline
                \hline
                $\psi_1$ & $\top$ & {\color{red}$\top$} & $\bot$
                \tabularnewline
                \hline
                $\psi_2$ & $\bot$ & $\top$ & $\bot$
                \tabularnewline
                \hline
                $\psi_3$ & $\bot$ & $\bot$ & $\top$
                \tabularnewline
                \hline
            \end{tabular}
            \caption*{$\begin{array}{cccc} x &= 1, & y_0 &= 1 \\ z &= 2, & y_1 &= 2\end{array}$}
        \end{center}
    \end{minipage}
    \begin{minipage}[t]{.245\textwidth}
        \begin{center}
            \begin{tabular}{c|c|c|c}
                    & $\chi_1$ & $\chi_2$ & $\chi_3$
                \tabularnewline
                \hline
                $\psi_1$ & $\top$ & $\top$ & {\color{red}$\top$}
                \tabularnewline
                \hline
                $\psi_2$ & $\bot$ & $\top$ & $\bot$
                \tabularnewline
                \hline
                $\psi_3$ & $\bot$ & $\bot$ & $\top$
                \tabularnewline
                \hline
            \end{tabular}
            \caption*{$\begin{array}{cccc} x &= 1, & y_0 &= 2 \\ z &= 3, & y_1 &= 3\end{array}$}
        \end{center}
    \end{minipage}
    \begin{minipage}[t]{.245\textwidth}
        \begin{center}
            \begin{tabular}{c|c|c|c}
                    & $\chi_1$ & $\chi_2$ & $\chi_3$
                \tabularnewline
                \hline
                $\psi_1$ & $\top$ & $\top$ & $\top$
                \tabularnewline
                \hline
                $\psi_2$ & $\bot$ & $\top$ & {\color{red}$\top$}
                \tabularnewline
                \hline
                $\psi_3$ & $\bot$ & $\bot$ & $\top$
                \tabularnewline
                \hline
            \end{tabular}
            \caption*{$\begin{array}{cccc} x &= 2, & y_0 &= 2 \\ z &= 3, & y_1 &= 3\end{array}$}
        \end{center}
    \end{minipage}

    \vspace{7pt}
    \caption{A Run of {\sc Plans} for $\varphi = \kh(p,p \land q) \land \kh(q, r) \land \kh(r \lor s, t)$ and $\psi = \top$.}
    \label{fig:plans}
\end{figure}

\begin{example}
    Let $\varphi = \kh(p,p \land q) \land \kh(q, r) \land \kh(r \lor s, t)$ and $\psi = \top$;
    in this case we have: $\psi_1 = p$, $\chi_1 = p \land q$, $\psi_2 = q$, $\chi_2 = r$, $\psi_3 = r \lor s$, and ${\chi_3 = t}$.
    We can easily verify that $\C(\varphi,\psi) = \set{(1,1)(1,2)(1,3)(2,2)(2,3)(3,3)}$.
    Indeed, in the initial step we get ${\C_0 = \set{(1,1)(2,2)(3,3)}}$.
    The pairs of indices correspond to those of the pre/post conditions of the subformulas $\kh(\psi_i,\chi_i) \in \sforms(\varphi)$.
    Then, since we have $\set{(1,1)(2,2)} \subseteq \C_0$, $\unsat(\set{\chi_1,\lnot\psi_2})$, and $\unsat(\set{\chi_2,\lnot\psi_3})$, it follows that  $\C_1 = \C_0 \cup \set{(1,2)(2,3)}$.
    The new pairs of indices can intuitively be taken as the formulas $\kh(\psi_1,\chi_2)$ and $\kh(\psi_2,\chi_3)$.
    In this case, note the connection between $\kh(\psi_1,\chi_2)$ and ${(\kh(\psi_1,\chi_1)\land\A(\chi_1 \to \psi_2)\land\kh(\psi_2,\chi_2)) \to \kh(\psi_1,\chi_2)}$, and $\kh(\psi_2,\chi_3)$ and $(\kh(\psi_2,\chi_2)\land\A(\chi_2 \to \psi_3)\land\kh(\psi_3,\chi_3)) \to \kh(\psi_2,\chi_3)$.
    Finally, since we have $(1,2) \in \C_2$ and $\unsat(\set{\chi_2,\lnot\psi_3})$, then $\C_2 = \C_1 \cup \set{(1,3)}$.
    The justification for the pair $(1,3)$ is similar to the one just offered.
    In \Cref{fig:plans} we illustrate a run of {\sc Plans} which computes this set (only the steps in which the matrix is updated are shown).
\end{example}


The composition of formulas $\kh(\psi,\chi)$ and $\kh(\psi',\chi')$ has an impact if we wish to add a formula $\lnot\kh(\psi'',\chi'')$ into the mix.
The reason for this is that witness plans $\plan$ and $\plan'$ for $\kh(\psi,\chi)$ and $\kh(\psi',\chi')$, respectively, yield a witness plan $\plan'' = \plan\plan'$ for $\kh(\psi,\chi')$.
In adding $\lnot\kh(\psi'',\chi'')$ we need to ensure $\plan''$ is not a witness for $\kh(\psi'',\chi'')$, as such a plan renders $\lnot\kh(\psi'',\chi'')$ unsatisfiable.
We make these ideas precise in the definition of \emph{compatible} below.

\begin{definition} \label{def:compatible}
    Let $\varphi^{+}$ and $\varphi^{-}$ be formulas s.t.:
        $\depth(\varphi^{+}) = 1$ and $ \depth(\varphi^{-}) = 1$;
        $\varphi^{+}= \kh(\psi_1,\chi_1) \land \dots \land \kh(\psi_n,\chi_n)$; and 
        $\varphi^{-}= \lnot \kh(\psi_1',\chi_1') \land \dots \land \lnot \kh(\psi_m',\chi_m')$.
    Moreover, let $I, J \subseteq [1,n]$ be as in \Cref{prop:sat+} and $\psi = \bigwedge_{i \in I}\lnot \psi_i$.
    We say that $\varphi^{+}$ and $\varphi^{-}$ are \emph{compatible} iff the following conditions are met:
        \begin{enumerate}[(1)]
            \item $\sat(\psi)$;
            \item for all
                $\kh(\psi'_{k'},\chi'_{k'}) \in \sforms(\varphi^{-})$,
                \begin{enumerate}[(a)]
                    \item $\sat(\set{\psi, \psi'_{k'}, \lnot \chi'_{k'}})$; and
                    \item for all $(x,y) \in \C(\varphi^{+}, \psi)$, \\
                        \hspace*{\fill}
                        if
                            $x \notin J$ and
                            $\unsat(\set{\psi, \psi'_{k'}, \lnot \psi_x})$,
                        then,
                            $\sat(\set{\psi, \chi_y,\lnot \chi'_{k'}})$.
                \end{enumerate}
        \end{enumerate}
\end{definition}

\Cref{def:compatible} 
aims to single out the conditions under which the formulas $\varphi^{+}$ and $\varphi^{-}$ can be jointly satisfied.
Intuitively, (1) tells us $\varphi^{+}$ must be individually satisfied (cf.\ \Cref{prop:sat+}).
In turn, (2.a) tells us $\varphi^{-}$ must be individually satisfied (cf.\ \Cref{prop:sat-}),
while (2.b) tells us $\varphi^{+}$ and $\varphi^{-}$ can be satisfied together if no composition of subformulas in $\varphi^{+}$ contradicts a subformula in $\varphi^{-}$.
Such a contradiction would originate only as a result of strengthening the precondition and/or weakening the postcondition of a composition of subformulas in $\varphi^{+}$, in a way such that they would result in the opposite of a subformula in $\varphi^{-}$.
\Cref{prop:compatible} states that the conditions in \Cref{def:compatible} 
guarantee the satisfiability of a combination of $\varphi^+$ and~$\varphi^-$.

\begin{proposition} \label{prop:compatible}
    It follows that $\varphi^{+}$ and $\varphi^{-}$ are compatible iff $\sat(\varphi^{+} \land \varphi^{-})$.
\end{proposition}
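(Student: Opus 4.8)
The plan is to prove both implications. Throughout write $\model \Vdash \alpha$ for $\truthset{\model}{\alpha} = \S$, and recall from \Cref{prop:sat+} not only that $\sat(\varphi^{+})$ iff $\sat(\psi)$, but that in \emph{any} $\model$ with $\model \Vdash \varphi^{+}$ one has $\truthset{\model}{\lnot\psi_i} = \S$ for every $i \in I$, so that $\model \Vdash \A\psi$. Recall also that $J = \setof{j}{\truthset{\model}{\psi_j}=\emptyset}$ collects the conjuncts with unsatisfiable (under $\psi$) precondition.

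For the direction $\sat(\varphi^{+}\land\varphi^{-}) \Rightarrow$ compatible, fix $\model$ with $\truthset{\model}{\varphi^{+}\land\varphi^{-}} \neq \emptyset$; then $\model \Vdash \varphi^{+}$ and $\model \Vdash \varphi^{-}$, and by the remark above $\model \Vdash \A\psi$ and $\sat(\psi)$, giving (1). For (2a), since $\truthset{\model}{\kh(\psi'_{k'},\chi'_{k'})} = \emptyset$ the empty plan $\varepsilon$ is not a witness, so $\truthset{\model}{\psi'_{k'}} \nsubseteq \truthset{\model}{\chi'_{k'}}$; any state in the (nonempty) difference also satisfies $\psi$, whence $\sat(\set{\psi,\psi'_{k'},\lnot\chi'_{k'}})$. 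For (2b), take $(x,y) \in \C(\varphi^{+},\psi)$ with $x\notin J$ and $\unsat(\set{\psi,\psi'_{k'},\lnot\psi_x})$. By \Cref{prop:chain} and $\model \Vdash \varphi^{+}\land\A\psi$ we get $\model \Vdash \kh(\psi_x,\chi_y)$, so there is a witness plan $\plan$ with $\truthset{\model}{\psi_x} \subseteq \stexec(\plan)$ and $\R_{\plan}(\truthset{\model}{\psi_x}) \subseteq \truthset{\model}{\chi_y}$. The unsat hypothesis with $\model\Vdash\A\psi$ gives $\truthset{\model}{\psi'_{k'}} \subseteq \truthset{\model}{\psi_x}$, so $\plan$ is strongly executable on $\truthset{\model}{\psi'_{k'}}$ and maps it into $\truthset{\model}{\chi_y}$; since $\kh(\psi'_{k'},\chi'_{k'})$ fails, $\plan$ is not a witness for it, forcing some $t \in \R_{\plan}(\truthset{\model}{\psi'_{k'}}) \subseteq \truthset{\model}{\chi_y}$ with $t \notin \truthset{\model}{\chi'_{k'}}$; as $t$ also satisfies $\psi$, this yields $\sat(\set{\psi,\chi_y,\lnot\chi'_{k'}})$.

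For the converse I would build a witnessing model directly. Let $\S$ be the (nonempty, by (1)) set of all propositional valuations satisfying $\psi$, with induced $\V$, so that $\truthset{\model}{\alpha} = \setof{v \in \S}{v \models \alpha}$ for propositional $\alpha$ and $\model \Vdash \A\psi$; set $\R_{a_i} = \truthset{\model}{\psi_i} \times \truthset{\model}{\chi_i}$ for each $i \in [1,n]$. Each $a_i$ (or $\varepsilon$ when $\truthset{\model}{\psi_i} = \emptyset$) witnesses $\kh(\psi_i,\chi_i)$, so $\model \Vdash \varphi^{+}$. It remains to show $\model \Vdash \varphi^{-}$, i.e.\ that no plan witnesses any $\kh(\psi'_{k'},\chi'_{k'})$: the empty plan is excluded exactly as in (2a), and for a non-empty plan $\plan = a_{i_1}\cdots a_{i_l}$ the point is that, because $\R_{a_{i_1}}$ has edges only out of $\truthset{\model}{\psi_{i_1}}$, strong executability of $\plan$ on the non-empty set $\truthset{\model}{\psi'_{k'}}$ forces $\truthset{\model}{\psi'_{k'}} \subseteq \truthset{\model}{\psi_{i_1}}$ together with the chain $\truthset{\model}{\chi_{i_j}} \subseteq \truthset{\model}{\psi_{i_{j+1}}}$ for $j < l$, and then $\R_{\plan}(\truthset{\model}{\psi'_{k'}}) = \truthset{\model}{\chi_{i_l}}$. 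These chain conditions are precisely the $\unsat(\set{\psi,\chi_{i_j},\lnot\psi_{i_{j+1}}})$ that place $(i_1,i_l) \in \C(\varphi^{+},\psi)$ (\Cref{def:composition}), and $i_1 \notin J$ since $\truthset{\model}{\psi_{i_1}} \supseteq \truthset{\model}{\psi'_{k'}} \neq \emptyset$; hence if $\plan$ were a witness we would have $\unsat(\set{\psi,\psi'_{k'},\lnot\psi_{i_1}})$ and $\unsat(\set{\psi,\chi_{i_l},\lnot\chi'_{k'}})$, contradicting (2b).

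The main obstacle is the converse direction: showing that the ad-hoc relations $\R_{a_i} = \truthset{\model}{\psi_i} \times \truthset{\model}{\chi_i}$ do not inadvertently create a witness for a negated conjunct. This hinges on the lemma that every strongly executable plan over $\truthset{\model}{\psi'_{k'}}$ collapses to a composition chain recorded in $\C(\varphi^{+},\psi)$, so that \Cref{def:compatible}(2b) can be invoked; getting the strong-executability bookkeeping right—in particular that the first action must already be executable, so that the precondition is genuinely strengthened to $\psi_{i_1}$—is the delicate part.
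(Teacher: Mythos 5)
Your proof is correct and takes essentially the same route as the paper's: the same canonical model built from all valuations of $\psi$ with the product relations $\R_{a_i} = \truthset{\model}{\psi_i} \times \truthset{\model}{\chi_i}$ (which coincides with the paper's restriction to $K = [1,n]\setminus(I\cup J)$, since these products are empty for $i \in I \cup J$), the same reduction of any would-be witness plan to a chain recorded in $\C(\varphi^{+},\psi)$ refuted by condition (2.b), and the same appeal to \Cref{prop:chain} for the direction from $\sat(\varphi^{+}\land\varphi^{-})$ to compatibility. The only differences are presentational: you establish (2.a) and (2.b) directly via a semantic witness rather than by contradiction through \Cref{prop:composition}, and you leave implicit both that $\R_b = \emptyset$ for actions $b$ outside $\set{a_1,\dots,a_n}$ (the paper's ``none of the above'' case) and that $i \notin I$ guarantees $\truthset{\model}{\chi_i} \neq \emptyset$, which is what makes $a_i$ strongly executable on $\truthset{\model}{\psi_i}$ --- glosses on par with the paper's own.
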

\begin{proof}
    ($\Rightarrow$)
    Suppose that $\varphi^{+}$ and $\varphi^{-}$ are compatible.
    Let
        $\model = \tup{\S, \R, \V}$ be s.t.
        $\S$ contains all valuations that make $\psi$ true; and
        $\R = \setof{\R_{a_k}}{a_k \in \ACT}$ where
        \begin{align*}
            \R_{a_k} & =
                \begin{cases}
                    {\truthset{\model}{\psi_k} \times \truthset{\model}{\chi_k}}
                        & \text{if } k \in K \\
                    \emptyset
                        & \text{otherwise},
                \end{cases}
        \end{align*}
    for $K = [1,n] \setminus (I \cup J)$.
    From (1), we know $\S \neq \emptyset$.
    It is not difficult to see that $\tset{\varphi^{+}}^{\model} = \S$ (cf.\ \Cref{prop:sat+}).
    The proof is concluded if $\tset{\varphi^{-}}^{\model} = \S$.
    We proceed by contradiction.
    Let $k' \in [1,m]$ be s.t.\ $\truthset{\model}{\kh(\psi'_{k'},\chi'_{k'})} = \S$; i.e., ($*$) exists $\plan \in \PLANS$ s.t. $\truthset{\model}{\psi'_j} \subseteq \stexec(\plan)$ and $\R_{\plan}(\truthset{\model}{\psi'_j}) \subseteq \truthset{\model}{\chi'_j}$.
    We consider the following cases.

    \begin{description}
        \item[\textnormal{($\plan = \varepsilon$)}]
            From (2.a), we know ${\truthset{\model}{\psi'_{k'} \land \lnot \chi'_{k'}} \neq \emptyset}$; i.e.,
            $\truthset{\model}{\psi'_{k'}} \nsubseteq \truthset{\model}{\chi'_{k'}}$.
            This implies ${\truthset{\model}{\psi'_{k'}} = \R_{\varepsilon}(\truthset{\model}{\psi'_{k'}}) \nsubseteq \truthset{\model}{\chi'_{k'}}}$.

        \item[\textnormal{($\plan \neq \varepsilon$ and $\plan = a_{k_1},\dots,a_{k_{|\plan|}}$ with $k_j \in K$ and $j \in [1,|\plan|]$)}]
            In this case we have:
            \begin{enumerate}[(a)]
                \item $\emptyset \neq \truthset{\model}{\psi'_{k'}} \subseteq \stexec(\plan) \subseteq \stexec(a_{k_1}) = \truthset{\model}{\psi_{k_1}}$;
                \item 
                    $\truthset{\model}{\chi_{k_j}} = \R_{a_{k_j}}(\truthset{\model}{\psi_{k_j}}) \subseteq \truthset{\model}{\psi_{k_{(j+1)}}}$; and
                \item $\truthset{\model}{\chi_{k_{|\plan|}}} = \R_{\plan}(\truthset{\model}{\psi'_{k'}}) \subseteq \truthset{\model}{\chi'_{k'}}$.
            \end{enumerate}
            Since $\S$ contains all valuations that make $\psi$ true; from (a)--(d) we get:
            \begin{enumerate}[(a)]
                \setcounter{enumi}{3}
                \item $\unsat(\set{\psi, \psi'_{k'}, \lnot \psi_{k_1}})$ --from (a);
                \item $\unsat(\set{\psi, \chi_{k_j}, \lnot \psi_{k_{(j+1)}}})$ --from (b);
                \item $\unsat(\set{\psi, \chi_{k_{|\plan|}}, \lnot \chi'_k})$ --from (c).
            \end{enumerate}
            From (e) and $\plan$, we obtain a sequence $k_1 \dots k_{|\plan|}$ that satisfies the conditions (\dag) and (\ddag) in the proof of \Cref{prop:chain}.
            Then, $(k_1,k_{|\plan|}) \in \C(\varphi^{+},\psi)$.
            From (a) and (2.a), $k_1 \notin J$.
            We are in an impossible situation:
                $(k_1,k_{|\plan|}) \in \C(\varphi^{+},\psi)$;
                $k_1 \notin J$; and
                $\unsat(\set{\psi, \chi_{k_{|\plan|}}, \lnot \chi'_k})$.
            This contradicts (2.b); meaning that $\varphi^{+}$ and $\varphi^{-}$ are not compatible.        
        \item[\textnormal{($\plan$ is none of the above)}]
            It is clear that $\truthset{\model}{\psi'_{k'}} \nsubseteq \stexec(\plan)$.
    \end{description}
    In all the cases above we have: $\truthset{\model}{\psi'_{k'}} \nsubseteq \stexec(\plan)$ or $\R_{\plan}(\truthset{\model}{\psi'_{k'}}) \nsubseteq \truthset{\model}{\chi'_{k'}}$; i.e., $\truthset{\model}{\kh(\psi'_{k'},\chi'_{k'})} = \emptyset$, a contradiction. 
    Then, $\truthset{\model}{\varphi^{-}} = \S$; and so $\sat(\varphi^{+} \land \varphi^{-})$.

    \medskip
    \noindent
    ($\Leftarrow$)
    Suppose $\sat(\varphi^{+} \land \varphi^{-})$; i.e.,
        exists ($\dagger$) $\model$ s.t.\
            $\tset{\varphi^{+} \land \varphi^{-}}^{\model} = \S$.
    From ($\dagger$) we get $\truthset{\model}{\varphi^{+}} = \S$.
    Using \Cref{prop:nts:universal}, we get $\truthset{\model}{\A\psi} = \S$.
    This establishes (1).
    The proof of (2.a) is by contradiction.
    Let $\kh(\psi'_{k'},\chi'_{k'}) \in \sforms(\varphi^{-})$ be s.t.\
        $\unsat(\set{\psi, \psi'_{k'}, \lnot \chi'_{k'}})$.
    Then, $\truthset{\model}{\psi'_{k'}} \subseteq \truthset{\model}{\chi'_{k'}}$.
    Choosing $\plan=\epsilon$, we obtain $\truthset{\model}{\kh(\psi'_{k'},\chi'_{k'})} = \S$.
    This contradicts ${\truthset{\model}{\varphi^{-}} = \S}$.
    The proof of (2.b) is also by contradiction.
    Let
        $\kh(\psi'_{k'},\chi'_{k'}) \in \sforms(\varphi^{-})$, 
        ($*$) ${(x,y) \in \C(\varphi^{+},\psi)}$,
        (\dag) $\unsat(\set{\psi,\psi'_{k'}, \lnot \psi_x})$, and
        (\ddag) $\unsat(\set{\psi, \chi_y, \lnot \chi'_{k'}})$.
    From (\dag) and (\ddag),
        $\truthset{\model}{\psi'_{k'}} \subseteq \truthset{\model}{\psi_x}$ and
        $\truthset{\model}{\chi_y} \subseteq \truthset{\model}{\chi'_{k'}}$.
    At the same time, from ($*$) and \Cref{prop:chain},
        $\S = \truthset{\model}{\varphi^{+}} \subseteq \truthset{\model}{\kh(\psi_x,\chi_y)}$.
    Then, using \Cref{prop:composition},
        $\truthset{\model}{\kh(\psi'_j,\chi'_j)} = \S$.
    This also contradicts $\truthset{\model}{\varphi^{-}} = \S$.
    Thus, $\varphi^{+}$ and $\varphi^{-}$ are compatible.
\end{proof}

Having at hand the result in \Cref{prop:compatible}, we proceed to define an algorithm for checking the satisfiability of compatible formulas $\varphi^{+}$ and $\varphi^{-}$.
This is done in two stages.
In the first stage, we build the set $\C(\varphi^{+},\psi)$, where $\psi$ is the conjunction of the negation of the precondition of the `global' subformulas in~$\varphi^{+}$.
This task is encapsulated in the function {\sc Plans} in~\Cref{alg:plans}.
Notice that the set $\C(\varphi^{+},\psi)$ corresponds to a matrix which is computed using the result in \Cref{prop:alternative:composition}.
The second stage is encapsulated in the function {\sc Compatible} in~\Cref{alg:compatible}.
In this function, lines $2$ and $3$ check condition (1) in~\Cref{def:compatible}, 
i.e., whether~$\varphi^{+}$ is individually satisfiable, by verifying the joint satisfiability of the `global' subformulas in $\varphi^{+}$ (cf.~\Cref{alg:sat+}).
In turn, lines $4$ to $6$ in {\sc Compatible} check condition (2.a) of~\Cref{def:compatible},
i.e., whether $\varphi^{-}$ is individually satisfiable, by verifying the individual satisfiability of the subformulas in $\varphi^{+}$ (cf.~\Cref{alg:sat-}).
Lastly, in lines $7$ to $18$ in {\sc Compatible}, we check whether the result of composing subformulas in $\varphi^{+}$ contradicts any of the subformulas in $\varphi^{-}$.
We carry out this task by making use of the result of the function {\sc Plans} which computes such compositions.

\algdef{SE}{Begin}{End}{\textbf{begin}}{\textbf{end}}
\begin{algorithm}[!t]
   \caption{\sc Compatible}
     \label{alg:compatible}
     \begin{algorithmic}[1]
    \Require
            $\varphi^{+}$ and $\varphi^{-}$ are as in \Cref{def:compatible}
    \Function{Compatible}{$\varphi^{+},\varphi^{-}$}
        \State $\Psi \gets \setof{\lnot \psi_i}{i \in \Call{Global}{\varphi^{+}}}$
        \State $r \gets \sat(\Psi)$
            \Comment{check for condition (1)}
        \For{$k' \gets 1$ {\bf to} $m$}
            \Comment{check for condition (2.a)}
            \State $r \gets r$ and $\sat(\Psi \cup \{\psi'_{k'},\lnot\chi'_{k'}\})$
        \EndFor
        \State $\C \gets \Call{Plans}{\varphi^{+},\bigwedge\Psi}$
        \For{$k' \gets 1$ {\bf to} $m$}
            \Comment{check for condition (2.b)}
            \For{$x \gets 1$ {\bf to} $n$}
                \For{$y \gets 1$ {\bf to} $n$}
                    \If{$(x,y) \in \C$ and $\sat(\Psi \cup \{\psi_x\})$ and not $\sat(\Psi \cup \{\psi'_{k'}, \lnot \psi_x\})$}
                        \State
                            $r \gets
                                r \text{ and }
                                \sat(\Psi \cup \set{\chi_y, \lnot \chi'_{k'}})$
                    \EndIf
                \EndFor
            \EndFor
        \EndFor
        \State \Return $r$
    \EndFunction
    \Ensure
            $\Call{Compatible}{\varphi^{+},\varphi^{-}}$ iff
            $\varphi^{+}$ and $\varphi^{-}$ are compatible
    \end{algorithmic}
\end{algorithm}

Notice that the function {\sc Compatible} in \Cref{alg:compatible} makes a polynomial number of calls to a propositional $\sat$ solver.
From this fact, we get the following result.

\begin{proposition}\label{prop:alg:compatible}
    Let $\varphi^{+}$, $\varphi^{-}$ be as in \Cref{def:compatible}; it follows that \Cref{alg:compatible} solves $\sat(\varphi^{+} \land \varphi^{-})$ and is in $\Poly^{\NP}$ (i.e., $\Delta^\Poly_2$ in \PH).
\end{proposition}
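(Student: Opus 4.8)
The plan is to separate the claim into a correctness part and a complexity part, letting \Cref{prop:compatible} carry all of the semantic weight. By \Cref{prop:compatible}, $\sat(\varphi^{+} \land \varphi^{-})$ holds iff $\varphi^{+}$ and $\varphi^{-}$ are compatible, so it suffices to argue that {\sc Compatible} returns true iff conditions (1), (2.a), and (2.b) of \Cref{def:compatible} all hold. Since the returned value of $r$ is the conjunction of the outcomes of every $\sat$ test performed along the way, this reduces to checking that each block of the algorithm faithfully tests the corresponding condition, and that no spurious test can wrongly set $r$ to false.

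For correctness, I would first record two facts about the subroutines. By \Cref{prop:alg:sat+} (the \textbf{ensure} clause of {\sc Global}), the call $\Call{Global}{\varphi^{+}}$ returns exactly the index set $I = I_0 \cup \dots \cup I_n$ of \Cref{prop:sat+}; hence $\Psi = \setof{\lnot \psi_i}{i \in I}$ and $\bigwedge \Psi$ is propositionally equivalent to the formula $\psi = \bigwedge_{i \in I}\lnot\psi_i$ of \Cref{def:compatible} (with $\Psi = \emptyset$ and the convention $\sat(\emptyset) = \mathit{true}$ covering the degenerate case $I = \emptyset$). By the \textbf{ensure} clause of {\sc Plans} (justified by \Cref{prop:alternative:composition}), line $7$ sets $\C = \C(\varphi^{+},\psi)$. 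With these in hand the three conditions line up with the algorithm as follows. Line $3$ tests $\sat(\Psi)$, i.e.\ $\sat(\psi)$, which is condition (1). Lines $4$--$6$ conjoin $\sat(\Psi \cup \set{\psi'_{k'},\lnot\chi'_{k'}})$ over all $k'$, i.e.\ condition (2.a). For condition (2.b) the only subtlety is the membership test ``$x \notin J$'': recalling that $J = \setof{j}{\unsat(\set{\psi,\psi_j})}$, we have $x \notin J$ iff $\sat(\set{\psi,\psi_x})$ iff $\sat(\Psi \cup \set{\psi_x})$, which is exactly the second guard on line $10$. The third guard, ``not $\sat(\Psi \cup \set{\psi'_{k'},\lnot\psi_x})$'', is $\unsat(\set{\psi,\psi'_{k'},\lnot\psi_x})$, and the body on line $11$ then demands $\sat(\Psi \cup \set{\chi_y,\lnot\chi'_{k'}})$; quantifying over all $k'$ and all $(x,y) \in \C(\varphi^{+},\psi)$ this is precisely condition (2.b). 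Since $r$ is initialised to the truth value of (1) and each subsequent line can only conjoin a test demanded by (2.a) or (2.b), $r$ is true on return iff all three conditions hold, which by \Cref{prop:compatible} is iff $\sat(\varphi^{+} \land \varphi^{-})$.

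For the complexity, I would observe that because $\depth(\varphi^{+}) = \depth(\varphi^{-}) = 1$, every precondition $\psi_i,\psi'_{k'}$ and every postcondition $\chi_i,\chi'_{k'}$ is propositional (modal depth $0$); consequently every argument handed to $\sat$ is a propositional formula of size polynomial in $\size{\varphi^{+}} + \size{\varphi^{-}}$, so each such call is a single query to an \NP oracle. It then remains to bound the number of queries: {\sc Global} performs $O(n^2)$ calls, {\sc Plans} performs $O(n^4)$ calls (four nested loops, one test each), line $3$ one call, lines $4$--$6$ exactly $m$ calls, and the triple loop on lines $8$--$17$ at most $O(m\,n^2)$ calls; all remaining work (matrix initialisation and updates, set construction, and the running conjunction in $r$) is deterministic polynomial. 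Hence {\sc Compatible} runs in deterministic polynomial time with a polynomial number of \NP-oracle queries, i.e.\ in $\Poly^{\NP} = \Delta^{\Poly}_2$.

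The heavy lifting having been done in \Cref{prop:compatible,prop:chain}, the main obstacle here is not conceptual but one of careful bookkeeping: verifying that the purely propositional guards of the algorithm encode the set-theoretic side conditions of \Cref{def:compatible} exactly, with no mismatch between $I$, $J$, and $\C(\varphi^{+},\psi)$. In particular, the identification of the guard $\sat(\Psi \cup \set{\psi_x})$ with ``$x \notin J$'' and the replacement of $\psi$ by the syntactically realised $\bigwedge \Psi$ computed from {\sc Global} are the points most in need of care.
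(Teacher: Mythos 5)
Your proof is correct and takes essentially the same approach as the paper's: correctness is delegated entirely to \Cref{prop:compatible} (together with the \textbf{ensure} clauses of {\sc Global} and {\sc Plans}), and the $\Delta^{\Poly}_2$ bound follows from observing that all $\sat$ queries are on propositional (depth-$0$) formulas and that only polynomially many are made. The paper's own proof is just a terser version of yours, so your extra bookkeeping---in particular matching the guard $\sat(\Psi \cup \set{\psi_x})$ with ``$x \notin J$'' and identifying $\bigwedge\Psi$ with $\psi$---merely makes explicit what the paper leaves implicit.
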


\begin{proof}
    By~\Cref{prop:compatible} we get that the function {\sc Compatible} in \Cref{alg:compatible} solves  $\sat(\varphi^{+} \land \varphi^{-})$. 
    Moreover, it makes a polynomial number of calls to a $\sat$ solver for formulas of modal depth $0$. 
    Thus, it runs in polynomial time with access to a $\sat$ oracle.
    Therefore, $\sat(\varphi^{+} \land \varphi^{-})$ is in $\Poly^{\NP}$, i.e., in $\Delta^{\Poly}_2$. 
\end{proof}

\Cref{prop:alg:compatible} is the final step we need to reach the main result of our work.

\begin{theorem}\label{theo:complexity}
    The satisfiability problem for $\KHlogic$ is in $\NP^\NP$ (i.e., $\Sigma^\Poly_2$ in \PH). 
\end{theorem}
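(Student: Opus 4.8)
The plan is to assemble the pieces already in place: first reduce an arbitrary input to leaf normal form with {\sc Flatten}, and then show that satisfiability of a leaf-normal formula reduces to an existential guess over the global truth-values of its modal leaves, followed by a single call to {\sc Compatible}.

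First I would run {\sc Flatten} on the input $\varphi$, obtaining by \Cref{prop:form2form1} in polynomial time an equisatisfiable $\varphi_0 \land \varphi_1$ with $\depth(\varphi_0)=0$ and $\varphi_1 = \bigwedge_{i=1}^r (\A k_i \liff \theta_i)$, where each $\theta_i$ is a modal leaf and each $k_i$ is a fresh propositional symbol. The key observation is that every $\theta_i$, every $\A k_i$, and hence every biconditional is \emph{global}: its truth set is either $\S$ or $\emptyset$. Consequently $\varphi_0 \land \varphi_1$ is satisfiable iff there is a choice $T \subseteq [1,r]$ of ``globally true'' leaves, consistent with the biconditionals, together with a model realizing it. I would make this precise by guessing $T$ and forming the two leaf-normal formulas
\[
\hat\varphi^{+} = \Big(\bigwedge_{i\in T}\theta_i\Big) \land \Big(\bigwedge_{i\in T}\A k_i\Big),
\qquad
\hat\varphi^{-} = \Big(\bigwedge_{i\notin T}\lnot\theta_i\Big) \land \Big(\bigwedge_{i\notin T}\lnot\A k_i\Big) \land \E\varphi_0 .
\]
Here I exploit that the global side-conditions are themselves expressible as leaves: using $\A\chi = \kh(\lnot\chi,\bot)$, the constraint ``$k_i$ holds everywhere'' is the positive leaf $\A k_i$, ``$k_i$ fails somewhere'' is the negative leaf $\lnot\A k_i = \E\lnot k_i$, and ``$\varphi_0$ holds somewhere'' is the negative leaf $\E\varphi_0 = \lnot\kh(\varphi_0,\bot)$. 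Empty conjunctions are padded with a trivially valid leaf such as $\kh(\top,\top)$ (witnessed by $\varepsilon$), and $\sat(\emptyset)$ is true by convention. Both $\hat\varphi^{+}$ and $\hat\varphi^{-}$ have modal depth $1$, so they are legitimate inputs to {\sc Compatible}.

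The central lemma to prove is then: $\sat(\varphi_0 \land \varphi_1)$ iff there exists $T \subseteq [1,r]$ with $\sat(\hat\varphi^{+} \land \hat\varphi^{-})$. For the forward direction, given $\model$ with $\model,s \Vdash \varphi_0 \land \varphi_1$, I take $T = \setof{i}{\truthset{\model}{\theta_i} = \S}$; the biconditionals force $\A k_i$ to agree with $\theta_i$, and $\model,s\Vdash\varphi_0$ gives $\model\Vdash\E\varphi_0$, so $\model \Vdash \hat\varphi^{+}\land\hat\varphi^{-}$. For the converse, any model of $\hat\varphi^{+}\land\hat\varphi^{-}$ makes $\theta_i$ and $\A k_i$ true exactly for $i \in T$, hence validates every biconditional (so $\model \Vdash \varphi_1$), and $\E\varphi_0$ yields a state satisfying $\varphi_0$; since $\varphi_1$ is global it holds at that state too, giving $\sat(\varphi_0\land\varphi_1)$. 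Combined with \Cref{prop:form2form1} and \Cref{prop:compatible}, this shows $\sat(\varphi)$ iff some guess $T$ makes $\hat\varphi^{+}$ and $\hat\varphi^{-}$ compatible.

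Finally I would account for the complexity. {\sc Flatten} is polynomial and introduces polynomially many $k_i$; the guess $T$ is a polynomial number of bits (the outer nondeterminism); building $\hat\varphi^{\pm}$ is polynomial; and by \Cref{prop:alg:compatible} checking compatibility is in $\Poly^{\NP}=\Delta_2^{\Poly}$. A nondeterministic polynomial-time machine that guesses $T$ and then runs {\sc Compatible} with its propositional $\sat$ oracle therefore decides $\sat(\varphi)$ in $\NP^{\NP}=\Sigma_2^{\Poly}$, as required. I expect the main obstacle to be the correctness of the reduction lemma, and specifically getting the \emph{direction} of the global encodings right: a leaf $\theta_i$ deemed false forces $\A k_i$ to be false, which asserts only that $k_i$ fails \emph{somewhere} (the leaf $\E\lnot k_i$), not that $k_i$ is false everywhere; conflating these two would make the reduction unsound.
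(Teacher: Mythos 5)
Your proposal is correct and follows the paper's overall architecture---{\sc Flatten}, an outer nondeterministic guess, then a single call to {\sc Compatible} with $\E\varphi_0 = \lnot\kh(\varphi_0,\bot)$ adjoined---but the guessed object and the encoding differ in a substantive way. The paper guesses a \emph{full propositional assignment} $v$ over all symbols of $\varphi'$ with $v \models \varphi_0$, splits the leaves into $\varphi^{+}=\bigwedge_{i\in P^{+}}\kh(\psi_i,\chi_i)$ and $\varphi^{-}=\bigwedge_{i\in P^{-}}\lnot\kh(\psi_i,\chi_i)\land\lnot\kh(\varphi_0,\bot)$ according to $v(p_i)$, and passes only these to {\sc Compatible}: the $\A p_i$ halves of the biconditionals produced by {\sc Flatten} are never asserted, so the coupling between $p_i$ and its leaf $\theta_i$ is left implicit (a priori, the witness state supplied by $\E\varphi_0$ may assign the $p_i$ differently from the guessed polarities---a point the paper's terse proof does not discuss, and which matters because the fresh symbols reappear inside outer leaves after iterated flattening). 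You instead guess only the set $T$ of globally true leaves and re-encode \emph{both} halves of each biconditional as extra depth-one leaves---$\A k_i=\kh(\lnot k_i,\bot)$ positively, $\lnot\A k_i$ negatively---so that any model of $\hat\varphi^{+}\land\hat\varphi^{-}$ validates $\varphi_1$ outright and the $\E\varphi_0$ witness immediately yields a state satisfying $\varphi_0\land\varphi_1$; your reduction lemma then closes via \Cref{prop:form2form1} with no matching argument needed. What each buys: the paper's guess $v\models\varphi_0$ prunes the nondeterminism to assignments satisfying the skeleton and keeps the input to {\sc Compatible} smaller; your encoding roughly doubles the number of conjuncts but makes the soundness direction immediate, your padding with the valid leaf $\kh(\top,\top)$ handles the boundary cases $T=\emptyset$ and $T=[1,r]$ that the paper glosses over, and your closing caveat---that $\lnot\A k_i$ asserts only $\E\lnot k_i$, not that $k_i$ is false everywhere---is precisely the point where a careless version of either reduction would become unsound. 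The complexity accounting coincides with the paper's: polynomially many guessed bits followed by a $\Delta^{\Poly}_2$ check (\Cref{prop:alg:compatible}) places the problem in $\NP^{\NP}=\Sigma^{\Poly}_2$.
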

\begin{proof}
    Let $\varphi$ be a $\KHlogic$-formula.
    By \Cref{alg:form2form1}, we can obtain, in polynomial time, a formula $\varphi'=\varphi_0 \land (\A{p_1} \leftrightarrow \kh(\psi_1,\chi_1)) \land \dots \land (\A{p_n} \leftrightarrow \kh(\psi_n,\chi_n))$ in leaf normal form such that $\varphi\equiv_{\sat}\varphi'$. 
    We know  $\depth(\varphi_0) = 0$ and $\depth(\kh(\psi_i,\chi_i)) = 1$.
    Let $Q = \set{q_1 \dots q_m} \subseteq \PROP$ be the set of proposition symbols in $\varphi'$.
    To check $\sat(\varphi')$, we start by guessing a propositional assignment $v: Q \to \set{0,1}$ that makes $\varphi_0$ true. 
    Then, we define sets
        $P^{+} = \setof{i}{v(p_i) = 1}$ and
        $P^{-} = \setof{i}{v(p_i) = 0}$,
        from which we build formulas
    \begin{align*}
        \varphi^{+}
            & = \textstyle \bigwedge_{i \in P^{+}}\kh(\psi_i,\chi_i) &
        \varphi^{-}
            & = \textstyle \left(\bigwedge_{i \in P^{-}}\lnot\kh(\psi_i,\chi_i)\right) \land \lnot\kh(\varphi_0,\bot)
    \end{align*}
    (recall that $\lnot\kh(\varphi_0,\bot) = \lnot\A\lnot\varphi_0 = \E\varphi_0$.)
    Finally, we use~\Cref{alg:compatible} to check $\sat(\varphi^+\wedge\varphi^-)$.
    Since \Cref{alg:compatible} is in $\Poly^{\NP}$ (\Cref{prop:alg:compatible}), the whole process is in $\NP^{\NP}$.
\end{proof}

We conclude this section with an example of how to check the satisfiability of a formula using the procedure in the proof of \Cref{theo:complexity}.

\begin{example}
Let $\psi = \kh(p \wedge q, r \wedge t) \vee \kh(p, r)$.
By applying \Cref{alg:form2form1}, we get $(k_1 \vee k_2) \wedge (\A k_1 \leftrightarrow  \kh(p \wedge q,  r \wedge t)) \wedge (\A k_2 \leftrightarrow \kh(p, r))$.
Suppose that we set $k_1$ to true and $k_2$ to false.
Based on this assignment, we build formulas $\varphi^+ = \kh(p \wedge q,  r \wedge t)$ and $\varphi^- = \neg \kh(p, r) \wedge \lnot \kh(k_1 \wedge \neg k_2,\bot)$.
Using \Cref{alg:compatible}, we can check that they are not compatible (and hence not satisfiable; we have $\sat(p \wedge q)$ and $\unsat(\set{(p \wedge q), \neg p})$ but not $\sat(\set{r\wedge t, \lnot r})$).
However, if we set both $k_1$ and $k_2$ to true, then, $\varphi^+ =\kh(p \wedge q, r \wedge t) \wedge  \kh(p, r)$ and $\varphi^- = \lnot\kh(k_1 \wedge k_2,\bot)$.
In this case, \Cref{alg:compatible} returns they are compatible, and thus satisfiable.

\end{example}

\section{Final Remarks}
\label{sec:final}
We provided a satisfiability-checking procedure for $\KHlogic$, the `knowing how' logic with linear plans from~\cite{Wang15lori,Wang2016},  obtaining a $\Sigma^P_2$ upper bound.
Although not a tight bound (as the best lower bound known is \NP), we argue this is an interesting result, as our bound is (unless \PH collapses) below the \PSPACE-complete complexity of model-checking~\cite{DF23}. We argue that, this unusual situation is a consequence of that in model-checking the full expressive power is exploited, while here we showed that plans are almost irrelevant for the satisfiability of a formula. 

Interestingly also, our procedure uses a polynomial transformation into a normal form without nested modalities, and calls to an \NP oracle (i.e., to a propositional $\sat$ solver). 
It is well-known that modern $\sat$ solvers are able to efficiently deal with large formulas (having millions of variables),  and usually support the exploration of the solution state space.  Thus, the ideas presented in this paper can be used to implement a $\sat$ solver for knowing-how logics relying on modern propositional $\sat$ solving tools.  We consider this as part of the future work to undertake. Also, we would like to obtain a tight bound for the satisfiability problem. In this regard, we will explore the possibility of providing a reduction from the problem of checking the truth of Quantified Boolean Formula (TQBF) with a single $\exists\forall$ quantification pattern (called $\Sigma_2\mathsf{Sat}$ in~\cite{AroraB09}), which is known to be $\Sigma^P_2$-complete. 

\paragraph*{\bf Acknowledgments.} We thank the reviewers for their valuable comments.
Our work is supported by 
the Laboratoire International Associ\'e SINFIN,
the EU Grant Agreement 101008233 (MISSION),
the
  ANPCyT projects
    PICT-2019-03134,
    PICT-2020-3780,
    PICT-2021-00400,
    PICT-2021-00675, and
    PICTO-2022-CBA-00088,
and the
  CONICET projects
    PIBAA-28720210100428CO,
    PIBAA-28720210100165CO, and
    PIP-11220200100812CO.

\bibliographystyle{splncs04}
\bibliography{references}




\end{document}